\documentclass{IEEEtran}
\usepackage{graphicx}
\usepackage{amsmath, amssymb, bm,bbm}
\usepackage{amsthm,cases}
\usepackage{subfigure}
\usepackage{epstopdf}
\usepackage{cite}
\usepackage{latexsym}
\usepackage{color}
\usepackage{epstopdf}
\usepackage{arydshln}
\usepackage{amsfonts}
\usepackage{accents}
\usepackage{multirow}
\usepackage{dsfont}
\usepackage{enumitem}
\newtheorem{proposition}{Proposition}

\newtheorem{theorem}{Theorem}

\DeclareGraphicsRule{.tif}{png}{.png}{`convert #1 `dirname #1`/`basename #1 .tif`.png}

  \makeatletter
  \def\widebar{\accentset{{\cc@style\underline{\mskip10mu}}}}
  \makeatother

\newcommand{\la}{\lambda}
\newcommand{\li}{\lambda_i}
\setcounter{topnumber}{10}%
\setcounter{bottomnumber}{5}
\setcounter{totalnumber}{10}%

\newcommand{\argmax}{\mathop{\rm arg~max}\limits}
\newcommand{\argmin}{\mathop{\rm arg~min}\limits}

\begin{document}

\title{Eigendecomposition-Free Sampling Set Selection for Graph Signals}%

\author{Akie~Sakiyama, Yuichi~Tanaka, Toshihisa~Tanaka, and~Antonio~Ortega
\thanks{This work was supported in part by JST PRESTO under Grant JPMJPR1656.}
\thanks{A. Sakiyama and Y. Tanaka are with the Graduate School of BASE, Tokyo University of Agriculture and Technology, Koganei, Tokyo, 184-8588, Japan. Y. Tanaka is also with PRESTO, Japan Science and Technology Agency, Kawaguchi, Saitama, 332-0012, Japan (email: sakiyama@msp-lab.org; ytnk@cc.tuat.ac.jp).}
\thanks{T. Tanaka is with the Department of Electronic and Information Engineering, Tokyo University of Agriculture and Technology, Koganei, Tokyo, 184-8588, Japan (e-mail: tanakat@cc.tuat.ac.jp)}
\thanks{A. Ortega is with the University of Southern California, Los Angeles, CA 90089 USA (email: antonio.ortega@sipi.usc.edu).}% <-this % stops a space
%\thanks{MATLAB code examples are available at http://tanaka.msp-lab.org/software.}
}
                                        % Activate to display a given date or no date

%\markboth{IEEE Transactions on }{}
\maketitle

\begin{abstract}
This paper addresses the problem of selecting an optimal sampling set for signals on graphs. The proposed sampling set selection (SSS) is based on a localization operator that can consider both vertex domain and spectral domain localization. We clarify the relationships among the proposed method, sensor position selection methods in machine learning, and conventional SSS methods based on graph frequency. In contrast to the conventional graph signal processing-based approaches, the proposed method does not need to compute the eigendecomposition of a variation operator, while still considering (graph) frequency information. We evaluate the performance of our approach through comparisons of prediction errors and execution time.
\end{abstract}
\begin{IEEEkeywords}
Graph signal processing, sampling set selection, graph sampling theorem, localization operator, graph uncertainty principle
\end{IEEEkeywords}
%
%\begin{center} \bfseries EDICS Category: DSP-BANK, DSP-TRSF, OTH-EMRG\end{center}

\section{Introduction}
\subsection{Motivation}
Graphs give intuitive and effective representations for visualizing or investigating large quantities of intricately interrelated data. Network topologies have been studied in graph theory for a long time. In the past half-decade, the theory of analyzing and processing data on the vertices of a graph as well as underlying graph topologies, namely, signal processing on graphs, has been developed rapidly\cite{Shuman2013, Sandry2013, Sandry2014,Ortega2018}.
This theory enables us to efficiently apply signal processing techniques to many practical problems, such as social \cite{Sandry2013}, traffic \cite{Crovel2003}, brain  \cite{Higash2016, Huang2018}, and sensor networks \cite{Shen2010, Leonar2013}, following approaches similar to those used for audio, image, or time domain signals in traditional signal processing. 
This paper considers sampling methods for graph signals, a key topic in the development of signal processing on graphs.
 
The sampling of graph signals is an essential task for treating big or complex-structured data in the real world \cite{Shuman2013, Sandry2013}. Handling such raw data consumes a significant amount of system resources, both storage and computation, and sampled versions capturing most of the relevant information in the data would thus be highly desirable.

A main challenge in graph signal sampling is that in general there is no such thing as ``regular sampling,'' and thus the sampling set has to be optimized based on the topology of the underlying graph. Many different approaches have been proposed for sampling set selection (SSS) on graphs \cite{Anis2014, Narang2013a,Gadde2014,Gadde2015,Chen2015,Shomor2014,Anis2016,Marque2016,Tanaka2018}.
Unfortunately, most of these methods have high computational complexity, as they require eigendecompositions to compute the graph Fourier basis (or some of its vectors).

SSS can be classified into deterministic and random sampling methods. Deterministic approaches \cite{Anis2014, Narang2013a,Gadde2014,Gadde2015,Chen2015,Shomor2014,Anis2016,Marque2016} select vertices (often one-by-one) such that a target cost function is minimized or maximized by each selection, whereas random methods \cite{Puy2018, Perrau2018} select vertices randomly according to some pre-computed probability distribution. In this study, we focus on the deterministic approach because it has the following advantage with respect to random sampling.
    
In random sampling-based methods, nothing prevents selecting vertices that have similar importance (higher probability) but that happen to be close to each other. We observe that this often happens, especially when graphs have irregular degree distributions and therefore the distribution of probabilities is also biased. When a high-probability vertex is selected and then a nearby vertex is also chosen, the second vertex may not lead to improvements in reconstruction performance. Thus, in practice, random sampling methods may perform well on average, but often require more samples than deterministic methods to achieve the same reconstruction quality.

Deterministic SSS techniques are based on selecting vertices for minimizing the reconstruction error when signals are reconstructed from their samples. They have been studied in the context of sampling theorems for graph signals \cite{Chen2015, Tsitsv2016, Anis2016}. They define a cost function based on the assumption that the reconstruction is performed using ideal filters under different optimality criteria (e.g., average case or worst case). 

Recently, a vertex-localized SSS was proposed \cite{Jayawa2018}. This is a two-step algorithm, whereby vertices are first screened to obtain a \textit{permissible set} of vertices, i.e., vertices that are far enough from those vertices that have already been selected. Then, an optimal vertex is selected from the permissible set. Although this is conceptually similar to our approach, ours is a one-step algorithm. To select sufficiently far vertices, we control the vertex/spectral spread using graph spectral filters other than the ideal filters.

The above SSS methods are summarized in Table \ref{tab:sss_summary}. The abbreviations of the deterministic methods are found in Section \ref{subsec:sss_gft}, along with the cost functions used in each case.

\begin{table*}[]
    \centering
    \caption{Comparison of Graph-based Sampling Set Selection Methods}
    \label{tab:sss_summary}
    \begin{tabular}{l|c|c|c|c}
    \hline
    & Deterministic/ & Kernel & Localization & Localization \\
    & random & & in vertex domain & in graph freq. domain\\\hline
    Cumulative coherence \cite{Puy2018} & Random & Ideal & \checkmark$^*$ & \checkmark\\
    Global/local uncertainty\cite{Perrau2018} & Random & Arbitrary &\checkmark & \checkmark\\
    MaxCutoff \cite{Anis2016} &  Deterministic & $\lambda^k$ ($k \in \mathbb{Z}_+$) & & \checkmark\\
    MinSpec/MinTrac \cite{Chen2015} &Deterministic & Ideal & & \checkmark\\
    MinFrob/MaxFrob/MaxPVol \cite{Tsitsv2016} &Deterministic & Ideal & & \checkmark\\
    Vertex screening \cite{Jayawa2018} & Deterministic & Ideal & \checkmark & \checkmark\\\hline
    Proposed method & Deterministic & Arbitrary & \checkmark & \checkmark \\
    \hline
    \multicolumn{5}{l}{$^*$ Localized in the vertex domain only if the ideal kernel is approximated by a polynomial}\\
    \end{tabular}
\end{table*}

In this paper, we propose a deterministic sampling method for graph signals based on the graph localization operator \cite{Perrau2018} and reveal the relationship among the sensor selection methods based on the Gaussian process \cite{Cressi2015, Shewry1987,Krause2008,Sharma2015}, the conventional graph sampling methods \cite{Anis2016, Chen2015, Tsitsv2016}, and the proposed method. The localization operator is introduced in the context of the uncertainty principle of graph signals \cite{Perrau2018}. It is the vertex domain operator with consideration of the graph frequency domain information. 

Our contributions in this paper are summarized as follows.
\begin{itemize}
\item Using the localized operator for SSS, the following benefits are obtained: a) graph frequency localization makes it possible to mimic the frequency-based SSS criteria of \cite{Anis2016, Chen2015, Shomor2014}, b) vertex localization is used to enable distributed SSS, and c) polynomial localization operators lead to lower complexity, i.e., eigendecomposition-free algorithms (see also Table \ref{tab:sss_summary}). This makes the SSS algorithm significantly faster (see Section \ref{sec:exp}).
\item We provide a unifying framework for many SSS techniques proposed to date as special cases of the localization operator-based SSS with different kernels and different optimization criteria to minimize the error. Even methods that were not initially viewed from a graph perspective, e.g., methods based on entropy \cite{Cressi2015, Shewry1987} and mutual information \cite{Krause2008,Sharma2015}, are included as its special cases (see Section \ref{sec:rela}).
\end{itemize}

Our preliminary work \cite{Sakiya2016,Sakiya2017} partially solved the problem of sensor position selection of sensor networks \cite{Krause2008, Deshpa2004, Krause2007, Uddin2014} using sampling theory for graph signals \cite{Anis2016, Chen2015, Shomor2014} and proposed a sensor selection method based on the localization operator. This paper adds many theoretical and practical implications. Specifically, we newly propose sampling approaches based on error minimization and clarify the relationships between the conventional sampling methods for graph signals and the proposed methods.

In the experiment, we present the execution time and prediction error comparisons to evaluate the performance of the proposed approach. The proposed method is faster and shows better performance than the conventional approaches\cite{Anis2016, Chen2015, Tsitsv2016}.

The rest of this paper is organized as follows. The preliminaries and notation are summarized in Section \ref{sec:pre}. Section \ref{sec:conv} introduces the sensor position selection approaches and the conventional graph sampling methods based on the graph Fourier basis. Section \ref{sec:pro} provides the signal reconstruction method and describes the proposed vertex and signal selection algorithm based on the graph localization operator. The section also compares the computational complexities of the conventional and proposed methods. Section \ref{sec:rela} clarifies that the proposed method has a deep connection with the conventional approaches introduced in Section \ref{sec:conv}. Section \ref{sec:exp} shows the experimental results of SSS and predicts the signals on unobserved vertices. Finally, Section \ref{sec:con} concludes the paper.

\subsection{Preliminaries and Notation}
\label{sec:pre}
A graph is represented as $\mathcal{G} = (\mathcal{V}, \mathcal{E})$, where $\mathcal{V}$ and $\mathcal{E}$ denote sets of vertices and edges, respectively. A graph signal is defined as $\bm{f}\in \mathbb{R}^N$, where $N$ is the number of vertices. We will only consider a connected, finite, undirected graph with no multiple edges. The variation operators are used for frequency analysis of graph signals. Although this paper mainly uses the graph Laplacian, we can use any variation operators, such as the adjacency matrix. 

The combinatorial graph Laplacian is defined as $\mathbf{ L}:=\mathbf{ D}-\mathbf{ A}$, where $\mathbf{ A}$ is the adjacency matrix whose $(m,n)$th element is the weight of the edge between $m$ and $n$ if $m$ and $n$ are connected, and 0 otherwise, and a diagonal matrix $\mathbf{ D}$ is the degree matrix whose $m$th diagonal element is $D({m,m}) = \sum_n A({m,n})$. 
The $i$th eigenvalue of $\mathbf{ L}$ is ${\li}$, which can be ordered, without loss of generality as: $0=\la_0<\la_1\le\la_2\ldots \leq\la_{N-1}=\lambda_{\text{max}}$, and its eigenvector is $\bm{u}_{i}\in\mathbb{C}^N$. 

The graph Fourier transform is defined as follows \cite{Chung1997,Hammon2011}:
$\widebar{\bm{f}} = \mathbf{ U}^*\bm{f},$
where $\mathbf{ U}=[\bm{u}_0\ldots \bm{u}_{N-1}]$ and $\cdot^*$ is the conjugate transpose of a matrix or a vector.
The inverse graph Fourier transform is $\bm{f} = \mathbf{U}\widebar{\bm{f}}$. 
Let $h(\li)$ be the spectral kernel; then, the filtering in the graph frequency domain can be written as $\bm{f}_\text{out}=\mathbf{ U}h(\mathbf{ \Lambda})\mathbf{ U}^*\bm{f}_\text{in}$, where $\mathbf{ \Lambda}=\text{diag}(\lambda_0, \ldots, \lambda_{N-1})$.

The $n$th element of the localization operator on the center vertex $i$ is defined as \cite{Perrau2018}
\begin{equation}
{\displaystyle T_{g,i}(n)=\sqrt{N}\sum^{N-1}_{l=0}{g}(\lambda_l)u_l^*(i)u_l(n),}
\label{eq:estcf}
\end{equation}
where ${g}(\lambda)$ is an arbitrary filter kernel.
The matrix arranging localization operator in a row is
\begin{equation}
\mathbf{T}=[\bm{T}_{g,0}\ \bm{T}_{g,1}\cdots\bm{T}_{g, N-1}]=\mathbf{ U}g(\mathbf{ \Lambda})\mathbf{ U}^*.
\label{loc}
\end{equation}

The other notation used in this paper is summarized in Table \ref{not}.

\begin{table}[t]
  \centering 
  \caption{Notation Used in This Paper: $\bm{x}\in \mathbb{R}^M$, $\mathbf{ X}\in \mathbb{R}^{M\times L}$ And $\mathbf{ Y}\in \mathbb{R}^{M\times M}$, And $\mathcal{A}$ And $\mathcal{B}$ Are Arbitrary Vector, Matrices And Sets, Respectively.}\label{t1}
  \begin{tabular}{c|l}
\hline
Symbol& Description\\\hline
$\det[\mathbf{ X}]$& determinant of $\mathbf{X}$\\[2pt]
$\text{tr}[\mathbf{ X}]$& trace of $\mathbf{X}$\\[2pt]
$|\mathbf{ X}|$&$\text{sgn}(\mathbf{ X})\circ\mathbf{ X}$\\[2pt]
$|\bm{x}|$&$\text{diag}(\text{sgn}(\bm{x}))\bm{x}$\\[2pt]
$|{\mathcal X}|$&number of elements in ${\mathcal X}$\\[2pt]
$\bm{x}_\mathcal{A}$&restriction of $\bm{x}$ to its components indexed by $\mathcal{A}$\\[2pt]
$\mathbf{ X}_{\mathcal{A}\mathcal{B}}$& restriction of $\mathbf{ X}$ to its rows by $\mathcal{A}$ and columns by $\mathcal{B}$\\[2pt]
 $\mathbf{ X}_{\mathcal{A}}$&$\mathbf{ X}_{\mathcal{A}\mathcal{A}}$\\[2pt]
\multirow{2}{*}{$\mu_i(\mathbf{ Y})$}& $i$th eigenvalue of $\mathbf{Y}$ \\
&$\mu_\text{min}(\mathbf{ Y})=\mu_0(\mathbf{ Y})\leq\cdots\leq\mu_{N-1}(\mathbf{ Y})=\mu_\text{max}(\mathbf{ Y})$\\[2pt]
 $\bm{v}_i(\mathbf{ Y})$& eigenvector of $\mathbf{ Y}$ corresponding to $\mu_i(\mathbf{ Y})$\\[2pt]
\multirow{2}{*}{$\sigma_i(\mathbf{ X})$}&$i$th singular value of $\mathbf{ X}$\\
&$\sigma_\text{min}(\mathbf{ X})=\sigma_0(\mathbf{ X})\leq\cdots\leq\sigma_{N-1}(\mathbf{ X})$\\[2pt]
  $\mathbf{ 1}_{\mathcal{A}}$ & ${1}_{\mathcal{A}}(m)=1$ if $m\in\mathcal{A}$ and $0$ otherwise\\
  \hline
  \end{tabular}
 % }
  \label{not}
  \end{table}

\section{Conventional Approaches for Sampling Set Selection}
\label{sec:conv}
We briefly introduce the objective functions of the conventional methods for selecting sensor locations and sampling points of graph signals. Their derivations are described in Appendices A and B.
We consider the problem of selecting $|\mathcal{S}|=F$ points, where $\mathcal{S}$ is the set of selected locations (for sensor selections) or vertices (for graph sampling theories), out of $|\mathcal{V}|=N$ possible locations or vertices in the original graph.

\subsection{Sensor Position Selection Based on Gaussian Process}
Sensor position selection algorithms have been developed in the area of machine learning. One of the major methods assumes that the spatial phenomena are modeled as a Gaussian process (GP) and, therefore, the stochastic signal $\bm{f}$ has the following Gaussian joint zero-mean distribution\cite{Deshpa2004}:
\begin{equation}
p(\bm{f})= \frac{1}{(2\pi)^{\frac{N}{2}}\det[\mathbf{ K}]}\exp{\left(-\frac{1}{2}\bm{f}^T\mathbf{ K}^{-1}\bm{f}\right)},
\label{eq:dis1}
\end{equation}
where $\cdot^T$ is the transpose of a matrix or a vector, and $\mathbf{ K}\in \mathbb{R}^{N\times N}$ is the covariance matrix of all locations $\mathcal{V}$ whose $(i, j)$th element is $\mathcal{K}(i,j)$ with a symmetric positive-definite kernel function $\mathcal{K}(\cdot,\cdot)$. The benefit of the GP model is that, if the signal $\bm{f}$ is distributed according to a multivariate Gaussian, the marginal and conditional distributions of its subset signal ${f}({y})$, where $y\in \mathcal{V}$, are also Gaussian with conditional variance $\sigma^2_{y|\mathcal{S}}=\mathcal{K}(y,y)-\mathbf{ K}_{y\mathcal{S}}\mathbf{ K}_{\mathcal{S}}^{-1}\mathbf{ K}_{\mathcal{S}y}$. Under this assumption, sensors are placed at the most informative locations.

\subsubsection{Entropy\cite{Cressi2015, Shewry1987}}
The objective function is
\begin{equation}
\mathcal{S}^*=\argmax_{\mathcal{S}\subset\mathcal{V}:|\mathcal{S}|=F}\log\det{[\mathbf{ K}_{\mathcal{S}}]}.
\label{eq:en}
\end{equation}
A greedy algorithm that adds sensor $y^*$ satisfying following condition to $\mathcal{S}$ one by one is used for optimization:
\begin{equation}
y^*\leftarrow\argmax_{y\in\mathcal{S}^c_m}{\mathcal{K}(y,y)-\mathbf{ K}_{y\mathcal{S}_m}\mathbf{ K}_{\mathcal{S}_m}^{-1}\mathbf{ K}_{\mathcal{S}_my}},
\label{eq:en3}
\end{equation}
where ${\mathcal{S}_m}$ are the already selected vertices in the $m$th iteration, ${\mathcal{S}_m^c}=\mathcal{V}\setminus\mathcal{S}_m$. 
\subsubsection{Mutual Information (MI) \cite{Krause2008,Sharma2015}} The objective function is
\begin{equation}
\mathcal{S}^*=\argmax_{\mathcal{S}\subset\mathcal{V}:|\mathcal{S}|=F}\log\det{[\mathbf{ K}_{\mathcal{S}}]}+\log\det{[\mathbf{ K}_{\mathcal{S}^c}]},
\label{eq:mi1}
\end{equation}
where ${\mathcal{S}^c}=\mathcal{V}\setminus\mathcal{S}$. A greedy algorithm is also used for optimization:
\begin{equation}
y^*\leftarrow\argmax_{y\in{\mathcal{S}}_m^c}\ \frac{\mathcal{K}(y,y)-\mathbf{ K}_{y{\mathcal{S}}_m}\mathbf{ K}_{{\mathcal{S}}_m}^{-1}\mathbf{ K}_{{\mathcal{S}}_my}}{\mathcal{K}(y,y)-\mathbf{ K}_{y\widebar{{\mathcal{S}}_m}}\mathbf{ K}_{\widebar{{\mathcal{S}}_m}}^{-1}\mathbf{ K}_{\widebar{{\mathcal{S}}_m}y}},
\label{eq:mi3}
\end{equation}
where $\widebar{{\mathcal{S}}}_m={\mathcal{V}}\setminus({\mathcal{S}_m}\cup y)$.

\subsection{Graph Sampling Based on Graph Fourier Basis}\label{subsec:sss_gft}
Sampling methods based on graph frequency consider the problem of reconstructing bandlimited graph signals from their subsampled versions \cite{Gadde2014,Gadde2015,Chen2015,Shomor2014}.
Note that we do not need to assume the GP model for the graph signal processing-based approaches (including the proposed approach).

Let us define $\omega$- (for \cite{Anis2016}) and $|\mathcal{F}|$- (for \cite{Chen2015,Tsitsv2016}) bandlimited graph signals as the signals that have zero graph Fourier coefficients corresponding to the eigenvalues greater than $\omega$ and $\lambda_{|\mathcal{F}|-1}$, respectively: $\widebar{f}(i)=0$ for $\li>\omega$ or $i\ge|\mathcal{F}|$, where $\mathcal{F}$ is the set of indices associated with nonzero graph Fourier coefficients.

\subsubsection{Based on Cutoff Frequency (MaxCutoff)}
The objective function\cite{Anis2016} is
\begin{equation}
{\mathcal{S}}^*=\argmax_{\mathcal{S}\subset\mathcal{V}:|\mathcal{S}|=F}\mu_\text{min}((\mathbf{ L}^k)_{\mathcal{S}^c}).
\label{gs1}
\end{equation}
The objective for a greedy optimization is represented as 
\begin{equation}
y^*\leftarrow\argmax_{y\in\mathcal{S}^c_m}\ [v_\text{min}^2(({\mathbf{ L}}^k)_{\mathcal{S}^c_m})](y).
\end{equation}

The signal recovered from the sampled one is calculated as
\begin{equation}
\widehat{\bm{f}}=\mathbf{ U}_{\mathcal{V}\mathcal{F}}\mathbf{ U}_{\mathcal{S}\mathcal{F}}^+\bm{f}_{\mathcal{S}},
\label{rec}
\end{equation}
where $\mathcal{F}$ is the set of eigenvalues less than or equal to the estimated cutoff frequency $\Omega_k(\mathcal{S})$ and $\cdot^+$ represents the pseudoinverse of a matrix.
If the original signal $\bm{f}$ is $\Omega_k(\mathcal{S})$-bandlimited, it can be perfectly recovered using \eqref{rec}.

\subsubsection{Based on Error Minimization}
\cite{Chen2015} assumes that $|\mathcal{S}|\geq|\mathcal{F}|$ and uses \eqref{rec} for the reconstruction. It proposes two objective functions for selecting optimal sampling sets:
\begin{itemize}
\item\textit{MinSpec:}
\begin{equation}
{\mathcal{S}}^*=\argmax_{\mathcal{S}\subset\mathcal{V}:|\mathcal{S}|=F} \sigma_{\text{min}}(\mathbf{ U}_{\mathcal{S}\mathcal{F}}).
\label{gs2-1}
\end{equation}
The greedy algorithm is used to optimize this problem:
\begin{equation}
y^*\leftarrow\argmax_{y\in \mathcal{S}^c_m}\sigma_\text{min}(\mathbf{ U}_{\mathcal{F}({\mathcal{S}_m}\cup y)}).
\end{equation}
\item\textit{MinTrac:}
\begin{equation}
\mathcal{S}^*=\argmin_{\mathcal{S}\subset\mathcal{V}:|\mathcal{S}|=F}\text{tr}[(\mathbf{ U}_{\mathcal{S}\mathcal{F}}^*\mathbf{ U}_{\mathcal{S}\mathcal{F}})^{-1}].
\label{gs2-2}
\end{equation}
This also uses a greedy algorithm, which selects the vertex $y^*$:
\begin{equation}
y^*\leftarrow\argmin_{y\in \mathcal{S}^c_m}\text{tr}[(\mathbf{ U}_{\mathcal{F}({\mathcal{S}_m}\cup y)}^*\mathbf{ U}_{\mathcal{F}({\mathcal{S}_m}\cup y)})^{-1}].
\end{equation}
\end{itemize}

\subsubsection{Based on Localized Basis}
\cite{Tsitsv2016} also assumes that $|\mathcal{S}|\geq|\mathcal{F}|$ and uses the following interpolation for recovering the original signal $\bm{f}$ from the sampled signal $\mathbf{ D}_\text{ver}\bm{f}$:
\begin{equation}
\begin{split}
\widehat{\bm{f}}=(\mathbf{ D}_\text{sp}\mathbf{ D}_\text{ver}\mathbf{ D}_\text{sp})^+\mathbf{ D}_\text{ver}\bm{f},
\end{split}
\label{rec_tes}
\end{equation}
where $\mathbf{ D}_\text{ver}=\text{diag}(\mathbf{ 1}_{\mathcal{S}})$ and $\mathbf{ D}_\text{sp}=\mathbf{ U}\text{diag}(\mathbf{ 1}_{\mathcal{F}})\mathbf{ U}^*$ are the sampling operator in the vertex domain and the bandlimiting operator in the graph frequency domain, respectively.
This approach can perfectly recover the original signal $\bm{f}$ if it is $|\mathcal{F}|$-bandlimited. 

There are three objective functions: 
\begin{itemize}
\item\textit{MinFrob:} 
\begin{equation}
\mathcal{S}^*=\argmin_{\mathcal{S}\subset\mathcal{V}:|\mathcal{S}|=F} \|(\text{diag}(\bm{1}_{\mathcal{F}})\mathbf{ U}^*\mathbf{ D}_\text{ver})^+\|_F.\\
\label{gs3-1}
\end{equation}
For this metric a greedy algorithm for optimization selects a vertex $y^*$ at the $m$th step:
\begin{equation}
y^*\leftarrow\argmin_{y\in\mathcal{S}_m^c}\sum_{i=0}^{m-1}\frac{1}{\sigma_i(\mathbf{ U}_{({\mathcal{S}_m}\cup y)\mathcal{F}}^*)}.
\end{equation}

\item\textit{MaxFrob:} 
\begin{equation}
\mathcal{S}^*=\argmax_{\mathcal{S}\subset\mathcal{V}:|\mathcal{S}|=F} \|\mathbf{ D}_\text{sp}\mathbf{ D}_\text{ver}\mathbf{ D}_\text{sp}\|_F.
\label{gs3-2}
\end{equation}
This can be solved by a simple strategy that selects the $|\mathcal{S}|$ columns of $\mathbf{ U}_{{\mathcal{S}}\mathcal{F}}^*$ that have the maximum $\ell_2$ norm.

\item\textit{MaxPVol:} 
\begin{equation}
\mathcal{S}^*=\argmax_{\mathcal{S}\subset\mathcal{V}:|\mathcal{S}|=F} \det[\mathbf{ U}_{\mathcal{SF}}\mathbf{ U}_{\mathcal{SF}}^*].
\label{gs3-3}
\end{equation}
This method also uses a greedy algorithm and the sampled vertex at the $m$th iteration is selected as
\begin{equation}
y^*\leftarrow\argmax_{y\in\mathcal{S}^c_m}\prod_{i=0}^{m-1}\mu_i(\mathbf{ U}_{{(\mathcal{S}_m\cup y)}\mathcal{F}}\mathbf{ U}_{{(\mathcal{S}_m\cup y)}\mathcal{F}}^*).
\end{equation}
\end{itemize}

\begin{table*}[t]
  \centering 
  \caption{Sampling methods with localization operator: $\mathbf{ T}^\text{L}$, $\mathbf{ T}^{\text{K}}$ and $\mathbf{ T}^\text{I}$ are  $\sqrt{N}\mathbf{ U}g(\mathbf{ \Lambda})\mathbf{ U}^*=[\bm{T}_0\bm{g}\ \bm{T}_1\bm{g}\ldots\ \bm{T}_{N-1}\bm{g}]$ with kernel $g(\lambda)=\lambda$, $g(\lambda)=\lambda^{-1}+\delta$ and ideal kernel $g(\lambda)=1$ if $\lambda\in \mathcal{F}$ and $0$ otherwise, respectively.}
  \begin{tabular}{c|c|l|l}
\hline
\multicolumn{2}{c|}{} & \multicolumn{1}{c|}{Objective}&\multicolumn{1}{c}{Objective w/ Localized Operator}\\\hline\hline
\multirow{3}{*}{GP-based approach}&Entropy\cite{Krause2008}& $\argmax_{\mathcal{S}\subset\mathcal{V}:|\mathcal{S}|=F}\log\det{[\mathbf{ K}_{\mathcal{S}}]}$& $\argmax_{\mathcal{S}\subset\mathcal{V}:|\mathcal{S}|=F}\det{[\bf T^\text{K}_{\mathcal{S}}]}$\\
&MI\cite{Krause2008} & $\argmax_{\mathcal{S}\subset\mathcal{V}:|\mathcal{S}|=F}\log\det{[\mathbf{ K}_{\mathcal{S}}]}+\log\det{[\mathbf{ K}_{{\mathcal{S}^c}}]}$& $\argmax_{\mathcal{S}\subset\mathcal{V}:|\mathcal{S}|=F}\det{[\mathbf{ T}^\text{K}_{\mathcal{S}}]}\det{[\mathbf{ T}^\text{K}_{{\mathcal{S}^c}}]}$\\\hline
\multirow{6}{*}{Graph frequency-based approach}
& MaxCutoff \cite{Anis2016}&$\argmax_{\mathcal{S}\subset\mathcal{V}:|\mathcal{S}|=F}\lambda_\text{min}((\mathbf{ L}^k)_{{\mathcal{S}^c}})$&$\argmin_{\mathcal{S}\subset\mathcal{V}:|\mathcal{S}|=F}\|(((\mathbf{ T}^{\text{L}})^{k})_{{\mathcal{S}^c}})^{-1}\|_2$\\
&MinSpec\cite{Chen2015}& $\argmin_{\mathcal{S}\subset\mathcal{V}:|\mathcal{S}|=F}\|\mathbf{ U}_{\mathcal{SF}}^+\|_2$& $\argmin_{\mathcal{S}\subset\mathcal{V}:|\mathcal{S}|=F}\|(\mathbf{ T}^\text{I}_{\mathcal{SV}})^+\|_2$\\
&MinTrac\cite{Chen2015}& $\argmin_{\mathcal{S}\subset\mathcal{V}:|\mathcal{S}|=F}\text{tr}[(\mathbf{ U}_{\mathcal{S}\mathcal{F}}^*\mathbf{ U}_{\mathcal{S}\mathcal{F}})^{-1}]$& $\argmin_{\mathcal{S}\subset\mathcal{V}:|\mathcal{S}|=F}\text{tr}[(\mathbf{ T}^\text{I}_\mathcal{S})^{-1}]$\\
&MinFrob \cite{Tsitsv2016}& $\argmin_{\mathcal{S}\subset\mathcal{V}:|\mathcal{S}|=F}\|(\mathbf{ D}_\text{sp}\mathbf{ D}_\text{ver}\mathbf{ D}_\text{sp})^+\|_F$& $\argmin_{\mathcal{S}\subset\mathcal{V}:|\mathcal{S}|=F}\text{tr}[(\mathbf{ T}^\text{I}_\mathcal{S})^{-1}]$\\
&MaxFrob\cite{Tsitsv2016}& $\argmax_{\mathcal{S}\subset\mathcal{V}:|\mathcal{S}|=F}\|\mathbf{ D}_\text{sp}\mathbf{ D}_\text{ver}\mathbf{ D}_\text{sp}\|_F$&$\argmax_{\mathcal{S}\subset\mathcal{V}:|\mathcal{S}|=F}\text{tr}[(\mathbf{ T}^\text{I}_\mathcal{S})]$\\
&MaxPVol\cite{Tsitsv2016}&$\argmax_{\mathcal{S}\subset\mathcal{V}:|\mathcal{S}|=F} \det{[\mathbf{ U}_{\mathcal{SF}}^*\mathbf{ U}_{\mathcal{SF}}]}$&$\argmax_{\mathcal{S}\subset\mathcal{V}:|\mathcal{S}|=F} \det{[\mathbf{ T}^\text{I}_{\mathcal{S}}]}$\\
\hline
  \end{tabular}
  \label{tab:con}
  \end{table*}
The conventional methods and their objective functions are summarized in Table \ref{tab:con}.

\section{Vertex Selection Based on Localization Operator}
\label{sec:pro}
This section introduces the proposed SSS. 
First, we present the reconstruction method of missing graph signals based on the localization operator in \eqref{loc}. The sampled vertices are selected to minimize the reconstruction error or maximize the information corresponding to the localization operator. The computational complexities of the proposed and conventional methods are also discussed in this section.

\subsection{Reconstruction Method}
In our method, the missing values are reconstructed by a linear combination of
$\widetilde{\bm{T}}_{g,j}:=(\mathbf{ T}^k)_{{j}{\mathcal V}}=(\mathbf{ U}g(\mathbf{ \Lambda})^k\mathbf{ U}^*)_{{j}{\mathcal V}}$
with arbitrary kernel $g(\cdot)$, i.e., the sampled signal $\bm{f}_{\mathcal S}$ is recovered as follows:
\begin{equation}
\widehat{\bm{f}}_k=\sum_{j\in {\mathcal S}}\beta_j\widetilde{\bm{T}}_{g,j}=(\mathbf{ T}^k)_{{\mathcal V}{\mathcal S}}\bm{\beta}=(\mathbf{ T}^k)_{{\mathcal V}{\mathcal S}}((\mathbf{ T}^k)_{{\mathcal S}})^{-1}\bm{f}_{\mathcal S},
\label{eq3}
\end{equation}
where $\bm{\beta}=((\mathbf{ T}^k)_{{\mathcal S}})^{-1}\bm{f}_{\mathcal S}$.
\begin{theorem} The $|\mathcal{F}|$-bandlimited signals with $|\mathcal{F}|\leq|\mathcal{S}|$ are perfectly recovered with \eqref{eq3} if $k$ becomes large, i.e.,
\begin{equation}
\lim_{k\rightarrow \infty}\widehat{\bm{f}}_k=\bm{f}
\end{equation}
 as long as the kernel of the localization operator satisfies $g(\lambda_i)>g(\lambda_j)$ for all $\lambda_i<|\mathcal{F}|$ and $\lambda_j\ge|\mathcal{F}|$.
\end{theorem}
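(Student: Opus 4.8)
The plan is to diagonalize everything in the graph Fourier basis and reduce the claim to an elementary statement about powers of a diagonal spectral kernel. Write $g(\mathbf{\Lambda})^k = \mathrm{diag}(g(\lambda_0)^k,\dots,g(\lambda_{N-1})^k)$, and observe that $\mathbf{T}^k = \mathbf{U}\,g(\mathbf{\Lambda})^k\,\mathbf{U}^*$. The hypothesis $g(\lambda_i) > g(\lambda_j)$ for all $\lambda_i < |\mathcal{F}|$ and $\lambda_j \ge |\mathcal{F}|$ means that, after normalizing by the largest entry $\gamma := \max_i g(\lambda_i)$ (which is attained on the ``in-band'' eigenvalues), the matrix $\gamma^{-k} g(\mathbf{\Lambda})^k$ converges as $k\to\infty$ to the diagonal projector $\mathrm{diag}(\mathbf{1}_{\mathcal{F}'})$, where $\mathcal{F}'$ is the set of indices where $g$ attains its maximum; under the stated ordering this set is exactly $\{0,1,\dots,|\mathcal{F}|-1\} = \mathcal{F}$. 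Hence $\gamma^{-k}\mathbf{T}^k \to \mathbf{U}\,\mathrm{diag}(\mathbf{1}_{\mathcal{F}})\,\mathbf{U}^* = \mathbf{U}_{\mathcal{V}\mathcal{F}}\mathbf{U}_{\mathcal{V}\mathcal{F}}^*$, the orthogonal projector onto the $|\mathcal{F}|$-bandlimited subspace.

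Next I would substitute this limit into the reconstruction formula \eqref{eq3}. Since $\gamma^{-k}$ cancels between $(\mathbf{T}^k)_{\mathcal{V}\mathcal{S}}$ and $((\mathbf{T}^k)_{\mathcal{S}})^{-1}$, we have
\begin{equation}
\widehat{\bm f}_k = (\mathbf{T}^k)_{\mathcal{V}\mathcal{S}}\big((\mathbf{T}^k)_{\mathcal{S}}\big)^{-1}\bm f_{\mathcal{S}}
= (\gamma^{-k}\mathbf{T}^k)_{\mathcal{V}\mathcal{S}}\big((\gamma^{-k}\mathbf{T}^k)_{\mathcal{S}}\big)^{-1}\bm f_{\mathcal{S}}.
\end{equation}
Taking $k\to\infty$ and using continuity of matrix multiplication and of matrix inversion (on the set of invertible matrices), the right-hand side tends to $\big(\mathbf{U}_{\mathcal{V}\mathcal{F}}\mathbf{U}_{\mathcal{V}\mathcal{F}}^*\big)_{\mathcal{V}\mathcal{S}}\big(\big(\mathbf{U}_{\mathcal{V}\mathcal{F}}\mathbf{U}_{\mathcal{V}\mathcal{F}}^*\big)_{\mathcal{S}}\big)^{-1}\bm f_{\mathcal{S}} = \mathbf{U}_{\mathcal{V}\mathcal{F}}\mathbf{U}_{\mathcal{S}\mathcal{F}}^*\big(\mathbf{U}_{\mathcal{S}\mathcal{F}}\mathbf{U}_{\mathcal{S}\mathcal{F}}^*\big)^{-1}\bm f_{\mathcal{S}}$. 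Finally, for an $|\mathcal{F}|$-bandlimited signal $\bm f = \mathbf{U}_{\mathcal{V}\mathcal{F}}\widebar{\bm f}_{\mathcal{F}}$ we have $\bm f_{\mathcal{S}} = \mathbf{U}_{\mathcal{S}\mathcal{F}}\widebar{\bm f}_{\mathcal{F}}$, and plugging in collapses the expression to $\mathbf{U}_{\mathcal{V}\mathcal{F}}\widebar{\bm f}_{\mathcal{F}} = \bm f$, which is the claim. (This last simplification is exactly the perfect-recovery identity behind \eqref{rec}.)

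The main obstacle is justifying that the inverse $((\mathbf{T}^k)_{\mathcal{S}})^{-1}$ exists for large $k$ and that its (rescaled) limit is the inverse of the limit — i.e., that $\mathbf{U}_{\mathcal{S}\mathcal{F}}\mathbf{U}_{\mathcal{S}\mathcal{F}}^*$ is invertible. This requires $\mathrm{rank}(\mathbf{U}_{\mathcal{S}\mathcal{F}}) = |\mathcal{F}|$, i.e., that $\mathcal{S}$ is a \emph{uniqueness set} for the bandlimited space. I would either add this as a standing assumption (it is the implicit nondegeneracy condition in \cite{Chen2015,Tsitsv2016} whenever \eqref{rec} is used, and $|\mathcal{F}|\le|\mathcal{S}|$ is necessary but not sufficient), or argue that for a generic sampling set the smallest eigenvalue $\mu_{\min}((\gamma^{-k}\mathbf{T}^k)_{\mathcal{S}})$ stays bounded away from $0$ as $k\to\infty$ since it converges to $\sigma_{\min}^2(\mathbf{U}_{\mathcal{S}\mathcal{F}}) > 0$; then the rescaled submatrix is eventually invertible and inversion is continuous there. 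A secondary point worth a sentence is handling the normalization carefully when $g$ can be negative or when there are ties: one should normalize by $\gamma = \max_i |g(\lambda_i)|$ only if $g>0$ on the band — but the kernels of interest ($g(\lambda)=\lambda^{-1}+\delta$, or the shifted Laplacian, per Table \ref{tab:con}) are positive, so $g(\lambda_i)>g(\lambda_j)\ge 0$ gives $g(\lambda_i)^k/\gamma^k \to \mathbf{1}[i\in\mathcal{F}]$ cleanly, with no oscillation.
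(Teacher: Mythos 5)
There is a genuine gap, and it sits exactly where you push the limit inside the inverse. Your hypothesis is only the strict separation $g(\lambda_i)>g(\lambda_j)$ between in-band and out-of-band eigenvalues; it does \emph{not} say $g$ is constant on the band. Hence the set $\mathcal{F}'$ where $g$ attains its maximum is generally a strict subset of $\mathcal{F}$ --- for the heat kernel $g(\lambda)=e^{-s\lambda}$ actually used in the paper, $\mathcal{F}'=\{0\}$ --- so $\gamma^{-k}g(\mathbf{\Lambda})^k$ does \emph{not} converge to $\mathrm{diag}(\mathbf{1}_{\mathcal{F}})$ and $\gamma^{-k}\mathbf{T}^k$ converges to a low-rank (typically rank-one) projector. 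Then $(\gamma^{-k}\mathbf{T}^k)_{\mathcal{S}}$ tends to a singular matrix and ``continuity of inversion on the set of invertible matrices'' is not available: $\|((\gamma^{-k}\mathbf{T}^k)_{\mathcal{S}})^{-1}\|$ diverges, and its divergence has to be cancelled against the numerator rather than handled by a termwise limit. Note the same obstruction appears even in the ideal-kernel case you do cover, because $\bigl(\mathbf{U}\,\mathrm{diag}(\mathbf{1}_{\mathcal{F}})\mathbf{U}^*\bigr)_{\mathcal{S}}=\mathbf{U}_{\mathcal{S}\mathcal{F}}\mathbf{U}_{\mathcal{S}\mathcal{F}}^*$ is an $|\mathcal{S}|\times|\mathcal{S}|$ matrix of rank at most $|\mathcal{F}|$, hence singular whenever $|\mathcal{S}|>|\mathcal{F}|$ (so your statement that its invertibility ``requires $\mathrm{rank}(\mathbf{U}_{\mathcal{S}\mathcal{F}})=|\mathcal{F}|$'' is also off --- it would require rank $|\mathcal{S}|$); the inverse in your limiting expression must be read as a pseudoinverse, which is exactly the discontinuous operation you cannot pass limits through at a rank drop.

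The paper's proof avoids this by never normalizing the whole matrix with a single scalar: it factors $\mathbf{T}^k$ into two half powers and rewrites the estimate as $\widehat{\bm f}_k=\mathbf{U}g^{k/2}(\mathbf{\Lambda})\bigl(\mathbf{U}_{\mathcal{S}\mathcal{V}}g^{k/2}(\mathbf{\Lambda})\bigr)^+\bm f_{\mathcal{S}}$, so the in-band diagonal factors $g(\mathbf{\Lambda}_{\mathcal{F}})^{k/2}$ appear once directly and once inverted inside the pseudoinverse and cancel \emph{exactly}, at every $k$; only the out-of-band factors, which decay relative to the in-band ones by the hypothesis (after dividing by $\beta=\min_{i<|\mathcal{F}|}g(\lambda_i)$), require a limiting argument, after which $\widehat{\bm f}_k\to\mathbf{U}_{\mathcal{V}\mathcal{F}}\mathbf{U}_{\mathcal{S}\mathcal{F}}^{+}\bm f_{\mathcal{S}}=\bm f$. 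If you want to repair your argument, you should follow this structure (or, equivalently, verify directly that $[G_{\mathcal{F}}^{-k/2}\bar{\bm f}_{\mathcal{F}};\,\mathbf{0}]$ is a solution of the interpolation system and bound the out-of-band component of the minimum-norm solution), rather than taking separate limits of $(\mathbf{T}^k)_{\mathcal{V}\mathcal{S}}$ and $((\mathbf{T}^k)_{\mathcal{S}})^{-1}$. One point in your favor: your observation that perfect recovery implicitly needs $\mathbf{U}_{\mathcal{S}\mathcal{F}}$ to have full column rank (i.e., $\mathcal{S}$ a uniqueness set, with $|\mathcal{F}|\le|\mathcal{S}|$ necessary but not sufficient) is correct and is indeed left implicit in the paper's proof.
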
 
\begin{proof} 
$\widehat{\bm{f}}_k$ can be rewritten as
\begin{equation}
\begin{split}
\widehat{\bm{f}}_k=&(\mathbf{ T}^k)_{{\mathcal V}{\mathcal S}}((\mathbf{ T}^k)_{{\mathcal S}})^{-1}\bm{f}_{\mathcal S}\\
=&(\mathbf{ U}g^k(\mathbf{ \Lambda})\mathbf{ U}^*)_{{\mathcal V}{\mathcal S}}((\mathbf{ U}g^k(\mathbf{ \Lambda})\mathbf{ U}^*)_{{\mathcal S}})^{-1}\bm{f}_{\mathcal S}\\
=&\mathbf{ U}g^k(\mathbf{ \Lambda})\mathbf{ U}_{{\mathcal S}{\mathcal V}}^*(\mathbf{ U}_{{\mathcal S}{\mathcal V}}g^k(\mathbf{ \Lambda})\mathbf{ U}_{{\mathcal S}{\mathcal V}}^*)^{-1}\bm{f}_{\mathcal S}\\
=&\mathbf{ U}g^{k/2}(\mathbf{ \Lambda})(\mathbf{ U}_{{\mathcal S}{\mathcal V}}g^{k/2}(\mathbf{ \Lambda}))^+\bm{f}_{\mathcal S}\\
:=&\mathbf{ U}g^{k/2}(\mathbf{ \Lambda})\widetilde{\bm{\alpha}}_k,
\label{eq:mod}
\end{split}
\end{equation}
where $\widetilde{\bm{\alpha}}_k:=(\mathbf{ U}_{{\mathcal S}{\mathcal V}}g^{k/2}(\mathbf{ \Lambda}))^+\bm{f}_{\mathcal S}$ and it is the estimation of $\bm{\alpha}_k=(g^{k/2}(\mathbf{ \Lambda}))^+\mathbf{ U}^*\bm{f}$ that is the modified graph Fourier coefficients by $g^{k/2}({\lambda})$. Because $\bm{f}$ is $|\mathcal{F}|$-bandlimited, $\alpha_k(m)=0$ for $m>|\mathcal{F}|$ is always satisfied.　
Because $\bm{f}_{{\mathcal S}}=\mathbf{ U}_{{\mathcal S}{\mathcal V}}g^{k/2}(\mathbf{ \Lambda})\bm{\alpha}_k$, $\widetilde{\bm{\alpha}}_k=(\mathbf{ U}_{\mathcal{S}\mathcal{V}}g^{k/2}(\mathbf{ \Lambda}))^+\bm{f}_\mathcal{S}$ is the estimation of $\bm{\alpha}_k$ only from $\bm{f}_\mathcal{S}$. 

The calculation of the pseudoinverse usually causes an estimation error. However, if the kernel satisfies $g(\lambda_i)>g(\lambda_j)$ for all $\lambda_i\leq|\mathcal{F}|$ and $\lambda_j>|\mathcal{F}|$, the error can be ignored. Here, \eqref{eq:mod} is equivalently rewritten as 
\begin{equation}
\widehat{\bm{f}}_k=\mathbf{ U}\left(\frac{g(\mathbf{ \Lambda})}{\beta}\right)^{k/2}\left(\mathbf{ U}_{\mathcal{S}\mathcal{V}}\left(\frac{g(\mathbf{ \Lambda})}{\beta}\right)^{k/2}\right)^+\bm{f}_{\mathcal S},
\label{eqn:f_hat}
\end{equation}
where ${\beta}=\min_{0\leq i\leq|\mathcal{F}|-1}\ g(\lambda_i)$. 
Because $g(\lambda_i)/\beta<1$ is satisfied for all $i\ge|\mathcal{F}|$, 
$\lim_{k\rightarrow \infty}(g(\lambda_i)/\beta)^{k/2}\rightarrow 0$.
Then, for a sufficiently large $k$,
\begin{equation}
\mathbf{ U}_{\mathcal{S}\mathcal{V}}\left(\frac{g(\mathbf{ \Lambda}_\mathcal{F})}{\beta}\right)^{k/2}\approx
\begin{bmatrix}
\left(\frac{g(\mathbf{ \Lambda})}{\beta}\right)^{k/2}\mathbf{ U}_{\mathcal{S}\mathcal{F}}^T &\mathbf{ 0}_{|\mathcal{S}|}
\end{bmatrix}^T
\end{equation}
and 
\begin{equation}
\left(\mathbf{ U}_{\mathcal{S}\mathcal{V}}\left(\frac{g(\mathbf{ \Lambda})}{\beta}\right)^{k/2}\right)^+\approx
\begin{bmatrix}
\left(\frac{g(\mathbf{ \Lambda}_\mathcal{F})}{\beta}\right)^{-k/2}\mathbf{ U}_{\mathcal{S}\mathcal{F}}^{+} &\mathbf{ 0}_{|\mathcal{S}|}
\end{bmatrix},
\label{eqn:pinv}
\end{equation} 
where $\mathbf{ 0}_{|\mathcal{S}|}$ is a $|\mathcal{S}|\times|\mathcal{S}|$ null matrix. 
From \eqref{eqn:pinv}, \eqref{eqn:f_hat} can be rewritten as
\begin{equation}
\widehat{\bm{f}}_k=\mathbf{ U}_{\mathcal{V}\mathcal{F}}\mathbf{ U}_{\mathcal{S}\mathcal{F}}^{+} 
\bm{f}_{\mathcal S}.
\end{equation}
This coincides with \eqref{rec}, and therefore, \eqref{eq3} can perfectly recover the $|\mathcal{F}|$-bandlimited signals.
\end{proof}

\subsection{Reconstruction Error}
Here, we consider the reconstruction error by \eqref{eq3} when the sampled signal contains additive noise, i.e., $\bm{o}=\bm{f}_{\mathcal{S}}+\bm{n}_{\mathcal{S}}$, where $\bm{f}$ is an $|\mathcal{F}|$-bandlimited signal and the additive noise $\bm{n}\in{\mathbb R}^{N}$ is i.i.d. and zero-mean.

The error $\bm{e} := \bm{f} - \widehat{\bm{f}}_k$ is represented as
\begin{equation}
\bm{e}=\bm{f}-(\mathbf{ T}^k)_{{\mathcal V}{\mathcal S}}((\mathbf{ T}^k)_{{\mathcal S}})^{-1}\bm{o}=(\mathbf{ T}^k)_\mathcal{VS}((\mathbf{ T}^k)_\mathcal{S})^{-1}\bm{n}_\mathcal{S}.
\end{equation}
This can be rewritten as
\begin{equation}
\begin{split}
\bm{e}=&\mathbf{ U}g^k(\mathbf{ \Lambda})\mathbf{ U}_{{\mathcal S}{\mathcal V}}^*(\mathbf{ U}_{{\mathcal S}{\mathcal V}}g^k(\mathbf{ \Lambda})\mathbf{ U}_{{\mathcal S}{\mathcal V}}^*)^{-1}\bm{n}_{\mathcal S}\\
=&\mathbf{ U}g^{k/2}(\mathbf{ \Lambda})\mathbf{ U}^*\mathbf{ U}g^{k/2}(\mathbf{ \Lambda})\mathbf{ U}_{{\mathcal S}{\mathcal V}}^*\\
&\times(\mathbf{ U}_{{\mathcal S}{\mathcal V}}g^{k/2}(\mathbf{ \Lambda})\mathbf{ U}^*\mathbf{ U}g^{k/2}(\mathbf{ \Lambda})\mathbf{ U}_{{\mathcal S}{\mathcal V}}^*)^{-1}\bm{n}_{\mathcal S}\\
=&\mathbf{ T}^{k/2}((\mathbf{ T}^{k/2})_{{\mathcal S}{\mathcal V}})^+\bm{n}_{\mathcal S}.
\end{split}
\end{equation}

Then, the error covariance matrix is calculated as
\begin{equation}
\begin{split}
\mathbf{ E}=&\bm{e}\bm{e}^*\\
=&\mathbf{ T}^{k/2}((\mathbf{ T}^{k/2})_{{\mathcal S}{\mathcal V}})^+\bm{n}_{\mathcal S}\bm{n}_{\mathcal S}^*((\mathbf{ T}^{k/2})_{{\mathcal S}{\mathcal V}}^*)^+\mathbf{ T}^{k/2}\\
=&\mathbf{ T}^{k/2}((\mathbf{ T}^{k/2})_{{\mathcal S}{\mathcal V}}^*(\mathbf{ T}^{k/2})_{{\mathcal S}{\mathcal V}})^+\mathbf{ T}^{k/2}.\\
\end{split}
\label{eqn:errorcov}
\end{equation}
For minimizing the error, we should minimize or maximize the trace, determinant, or maximum eigenvalue of $\mathbf{E}$, depending on the optimization strategy. The approaches are summarized in Table \ref{tab:od}, and their derivations are shown in Appendix C.

\begin{table}[t]
  \centering 
  \caption{Proposed Vertex Selection Based on Minimization of Error Covariance Matrix.}
  \begin{tabular}{c|l}
\hline
Optimal Design & \multicolumn{1}{c}{Objective}\\\hline
A-optimal& $\argmin_{\mathcal{S}\subset\mathcal{V}:|\mathcal{S}|=F} \text{tr}[((\mathbf{ T}^{k})_{{\mathcal S}})^{-1}]$\\
D-optimal & $\argmin_{\mathcal{S}\subset\mathcal{V}:|\mathcal{S}|=F} \det[((\mathbf{ T}^{k})_{{\mathcal S}})^{-1}]$\\
E-optimal & $\argmin_{\mathcal{S}\subset\mathcal{V}:|\mathcal{S}|=F} \|((\mathbf{ T}^{k/2})_{{\mathcal S}{\mathcal V}})^+\|_2$\\
T-optimal & $\argmax_{\mathcal{S}\subset\mathcal{V}:|\mathcal{S}|=F} \text{tr}[(\mathbf{ T}^{k})_{{\mathcal S}}]$\\
\hline
  \end{tabular}
  \label{tab:od}
  \end{table}

\subsection{Vertex Selection Methods Based on Covering Area of Localization Operator}
While the cost function based on \eqref{eqn:errorcov} can interpret various existing SSS approaches as its special cases  (see Section \ref{sec:rela}), a naive realization of maximizing/minimizing the cost functions in Table \ref{tab:od} needs eigendecomposition, which leads to high computational complexity.

We reconsider the intuition of the localization operator, which avoids the abovementioned problem. Intuitively, the set of vertices $\mathcal{S}$ should be the most informative concerning the localization operator. Each localization operator $\bm{T}_{g,i}$ would be regarded as the area where the $i$th vertex can estimate unobserved signal values. Therefore, we select vertices such that $\bm{T}_{g,i}\ (i\in \mathcal{S})$ covers the entire area evenly, i.e., the sum of $\|\bm{T}_{g,i}\|_2^2\ (i\in \mathcal{S})$ is large and the overlapping area covered by both $\bm{T}_{g,i}$ and $\bm{T}_{g,j}\  (i\neq j)$ is small.

Such a set is obtained by optimizing the following function:
\begin{equation}
\begin{split}
\mathcal{S}^*&=\argmax_{\mathcal{S}\subset\mathcal{V}:|\mathcal{S}|=F} \sum_{i\in \mathcal{S}} \langle\bm{T}_{g,i},\bm{T}_{g,i}\rangle-\sum_{j\in \mathcal{S}, j\neq i} \left\langle|\bm{T}_{g,i}|, |{\bm{T}_{g,j}}|\right\rangle\\
&=\argmax_{\mathcal{S}\subset\mathcal{V}:|\mathcal{S}|=F} \sum_{i\in \mathcal{S}} \left\langle\left(|\bm{T}_{g,i}|-\sum_{ j\in \mathcal{S}, j\neq i}|{\bm{T}_{g,j}}|\right),|\bm{T}_{g,i}|\right\rangle,
\label{eq:obj}
\end{split}
\end{equation}
To optimize the cost function, we use a greedy algorithm, which appends one vertex in the $m$th iteration by selecting a vertex $y^*$ satisfying the following function:
\begin{equation}
y^*=
\argmax_{y\in \mathcal{S}_m^c}\ \left\langle  R\left(\eta\mathbf{ 1}_{N\times 1}-\sum_{j\in\mathcal{S}_m}|{\bm{T}_{g,j}}|\right),|\bm{T}_{g,y}|\right\rangle,
\label{eq:cost2}
\end{equation}
where $R(\cdot)$ is the ramp function that satisfies $[R(\bm{x})](i)=x(i)$ if $x(i)\ge0$ and $0$ otherwise, and  $\eta\in\mathbb{R}_+$ is an arbitrary real value.

In \eqref{eq:cost2}, we calculate the weighted norm of $\bm{T}_{g,y}$. A small weight is assigned to ${T}_{g,y}(i)$ if the $i$th vertex has already been covered: In this case, the weight of $\sum_{j\in \mathcal{S}_m}|\bm{T}_{g,j}|$ at the $i$th vertex is large. In each iteration, we avoid selecting vertices whose localization operators overlap with those of already-selected vertices, because the weight for ${T}_{g,y}(i)$ becomes $0$ when $\sum_{j\in\mathcal{S}_m}|{T}_{g,j}(i)|\ge \eta$.
In this study, we use
$\eta=\frac{1}{|\mathcal{V}|}\sum_{i\in\mathcal{V}}\sum_{j\in\mathcal{S}_m}|T_{g,j}(i)|,$
which is experimentally determined.

If the kernel $g(\lambda)$ is a polynomial function, we can calculate \eqref{eq:cost2} without an eigendecomposition of the graph Laplacian. This is because \eqref{loc} is rewritten as $\mathbf{ T}=\sqrt{N}g(\mathbf{ L})$ when $g(\lambda)$ is a polynomial function. Therefore, localization operators can be obtained without the eigenvectors themselves. As a result, if the original kernel $g(\lambda)$ is a polynomial or the Chebyshev polynomial approximation is applied to $g(\lambda)$, an eigendecomposition is not required for the proposed SSS. Using the polynomial function, \eqref{eq:cost2} can be rewritten as
\begin{equation}
y^*=\argmax_{y\in \mathcal{S}_m^c}\ \left[\ \!\left|\mathbf{ T}\right|\bm{w}\ \!\!\right](y),
\label{eq:cost3}
\end{equation}
where $\bm{w}=R\left(\eta\mathbf{ 1}_{N\times 1}-\sum_{j\in\mathcal{S}_m}|{\bm{T}_{g,j}}|\right)$. In particular, when $g(\cdot)$ is a heat kernel, i.e., $g(\lambda)=\exp(-s\lambda)$ for some constant $s>0$, all elements in $\bm{T}_{g,j}$ have nonnegative values \cite{Hammon2011}. Therefore, we need not calculate the absolute value of each element in the localization operators.

\subsection{Computational Complexity}

%\begin{table*}[t]
%  \centering 
%  \caption{Computational Complexities of Graph Signal Processing-based Approaches}\label{t1}
%  \begin{tabular}{c|c|c|c|c|c|c}
%\hline
%& MaxCutoff \cite{Anis2016}	&	MinSpec	\cite{Chen2015}	&	MinFrob\cite{Tsitsv2016}	&	MaxFrob\cite{Tsitsv2016}	&	MaxPVol\cite{Tsitsv2016}&Proposed Method w/ CPA\\\hline
%Eigen-pair or operator computations &$O(k|\mathcal{E}|FT_1)$&\multicolumn{4}{c|}{$O((|\mathcal{E}|F+CF^3)T_F)$}&$O((|\mathcal{E}|+N)P+J)$\\\hline
%Sampling set search&$O(NF)$&$O(NF^4)$&$O(NF^4)$&$O(NF)$&$O(F^3)$&$O(JF)$\\\hline\hline
%Section&III-B-1&III-B-2&III-B-3-i&III-B-3-ii&III-B-3-iii&IV-B-2\\\hline
%  \end{tabular}
%  \label{table:cc}
%  \end{table*}
  
  \begin{table*}[t]
  \centering 
  \caption{Computational Complexities of Graph Signal Processing-based Approaches}\label{t1}
  \begin{tabular}{c|c|c|c|c|c|c}
\hline
& MaxCutoff \cite{Anis2016}	&	MinSpec	\cite{Chen2015}	&	MinFrob\cite{Tsitsv2016}	&	MaxFrob\cite{Tsitsv2016}	&	MaxPVol\cite{Tsitsv2016}&Proposed Method w/ CPA\\\hline
Eigen-pair or operator computations &$O(k|\mathcal{E}|FT_1)$&\multicolumn{4}{c|}{$O((|\mathcal{E}|F+CF^3)T_F)$}&$O(|\mathcal{E}|NP)$\\\hline
Sampling set search&$O(NF)$&$O(NF^4)$&$O(NF^4)$&$O(NF)$&$O(F^3)$&$O(JF)$\\\hline\hline
Section&III-B-1&III-B-2&III-B-3-i&III-B-3-ii&III-B-3-iii&IV-B-2\\\hline
  \end{tabular}
  \label{table:cc}
  \end{table*}

Table \ref{table:cc} compares the computational complexities of the graph signal processing-based methods\cite{Anis2014,Anis2016,Chen2015} and the proposed method shown in Section IV-B-2, where $T_1$ is the average number of iterations required for the convergence of a single eigen-pair, $T_F$ is the number of iterations of convergence for the first $F$ eigen-pair, $k$ provides a trade-off between performance and complexity of the method proposed in \cite{Anis2016}, $C$ is a constant, $P$ is the approximation order of the Chebyshev polynomial approximation, and $J$ is the number of nonzero elements in $\mathbf{T}$. We follow the notation in \cite{Anis2016}. 

Note that $\mathbf{T}$ is a sparse matrix because its $i$th row has nonzero elements only at the columns corresponding to the $i$th vertex and several neighboring vertices.
The calculation of the localization operator in the proposed method includes complexity for performing the Chebyshev polynomial approximation and filtering\cite{Hammon2011}.

It can be seen that the calculation of the localization operator shows much lower complexity than  those for calculating the eigen-pairs in the other approaches. MaxCutoff needs the calculation of eigen-pairs in each iteration whereas the other methods calculate the eigen-pairs or operator only once. Therefore, although the proposed method has a higher complexity order than MaxCutoff in the sampling set search in Table \ref{table:cc}, its total execution time is usually lower than MaxCutoff.

\section{Relationship Between Proposed and Conventional Methods}
\label{sec:rela}
The objective functions of the conventional methods can be rewritten using the localization operator with various kernels. They are summarized in Table \ref{tab:con}. Furthermore, we show that the existing approaches based on the graph Fourier basis are one of the proposed method based on the error minimization.

\subsection{Sensor Position Selection Based on Gaussian Process}
The sensor selection methods based on the GP model introduced in Section III-A can be viewed as the SSS approaches for graph signals that use the covariance matrix instead of the Laplacian matrix. In general, the graph Laplacian (precision matrix) and the covariance matrix have the following relationship\cite{Lawren2012}:
\begin{equation}
\mathbf{ L}=\mathbf{ K}^{-1}-\delta\mathbf{ I}.
\label{eq:invc}
\end{equation} 
The parameter $\delta$ prevents the precision matrix from being singular. The precision matrix has the same set of eigenvectors $\{\bm{u}_i=\bm{v}_i(\mathbf{ K})\}_{i=0,\ \ldots,\ N-1}$ with corresponding eigenvalues $\{\lambda_i=\frac{1}{\mu_i(\mathbf{ K})}-\delta\}_{i=0,\ \ldots,\ N-1}$. 

From \eqref{eq:invc}, \eqref{eq:dis1} indicates that the random signals have following distributions:
\begin{equation}
\begin{split}
p(\bm{f})&\propto\exp\left(-\bm{f}^T\mathbf{ K}^{-1}\bm{f}\right)\\
&=\exp\left(-\bm{f}^T(\mathbf{ L}+\delta\mathbf{ I})\bm{f}\right)\\
&= \exp\left(-\sum_{i}\sum_{j}A({i,j})({f}(i)-{f}(j))^2-\delta\sum_{i}{f}(i)^2\right),\\
\label{eq:dis2}
\end{split}
\end{equation}
namely, vertices with similar signal values are connected by edges with large weights. This also indicates that the signals are smooth over the graph with the Laplacian obtained by \eqref{eq:invc}. 

The localization operator can rewrite the entropy criterion \eqref{eq:en} as
\begin{equation}
\begin{split}
\mathcal{S}^*&=\argmax_{\mathcal{S}\subset\mathcal{V}:|\mathcal{S}|=F} \log\det[ \mathbf{ K}_\mathcal{S}]\\
&=\argmax_{\mathcal{S}\subset\mathcal{V}:|\mathcal{S}|=F}\det[({(\mathbf{ L}+\delta\mathbf{ I})^{-1}})_\mathcal{S}]\\
&=\argmax_{\mathcal{S}\subset\mathcal{V}:|\mathcal{S}|=F}\det[ \mathbf{ T}^\text{K}_\mathcal{S}],\\
\end{split}
\end{equation}
where $\mathbf{ T}^\text{K}=(\mathbf{ L}+\delta\mathbf{ I})^{-1}$, i.e., the localization operator with the kernel $g(\lambda)=1/(\lambda+\delta)$.
Similarly, \eqref{eq:mi1} is
\begin{equation}
\begin{split}
\mathcal{S}^*&=\argmax_{\mathcal{S}\subset\mathcal{V}:|\mathcal{S}|=F} \log\det[\mathbf{ K}_\mathcal{S}]+\log\det[\mathbf{ K}_\mathcal{\widebar{S}\widebar{S}}]\\
&=\argmax_{\mathcal{S}\subset\mathcal{V}:|\mathcal{S}|=F}\ \det[\mathbf{ T}^\text{K}_\mathcal{S}]\det[\mathbf{ T}^\text{K}_\mathcal{\widebar{S}\widebar{S}}].
\end{split}
\end{equation}

Next, we clarify the characteristic of the greedy optimization step from a graph signal processing perspective.
From the block matrix inversion formula, the inversion of the covariance matrix can be represented as\cite{Gadde2015}:
\begin{equation}
\begin{split}
\mathbf{ K}^{-1}&=\begin{bmatrix}
\mathbf{ K}_{\mathcal{S}^c} & \mathbf{ K}_{\mathcal{S}^c\mathcal{S}}\\
\mathbf{ K}_{\mathcal{S}\mathcal{S}^c} & \mathbf{ K}_{\mathcal{S}}
\end{bmatrix}^{-1}\\
&=\begin{bmatrix}
\mathbf{ K}_{\mathcal{S}^c|\mathcal{S}}^{-1} & -(\mathbf{ K}_{{\mathcal{S}}^c})^{-1}\mathbf{ K}_{\mathcal{S}^c\mathcal{S}}\mathbf{ K}^{-1}_{\mathcal{S}|\mathcal{S}^c}  \\
-(\mathbf{ K}_{\mathcal{S}})^{-1}\mathbf{ K}_{\mathcal{S}^c\mathcal{S}}^T\mathbf{ K}^{-1}_{\mathcal{S}^c|\mathcal{S}}& \mathbf{ K}_{\mathcal{S}|\mathcal{S}^c}^{-1}\\
\end{bmatrix},
\end{split}
\label{eq:invk}
\end{equation}
where $\mathbf{ K}_{\mathcal{S}^c|\mathcal{S}}=\mathbf{ K}_{\mathcal{S}^c}-\mathbf{ K}_{\mathcal{S}^c\mathcal{S}}(\mathbf{ K}_{\mathcal{S}})^{-1}\mathbf{ K}_{\mathcal{S}^c\mathcal{S}}^T$ and $\mathbf{ K}_{\mathcal{S}|\mathcal{S}^c}=\mathbf{ K}_{\mathcal{S}}-\mathbf{ K}_{\mathcal{S}\mathcal{S}^c}(\mathbf{ K}_{\mathcal{S}^c})^{-1}\mathbf{ K}_{\mathcal{S}\mathcal{S}^c}^T$.
Using \eqref{eq:invc} and \eqref{eq:invk}, the graph Laplacian and the covariance matrix have the following relationship:
\begin{equation}
\mathbf{ L}_{\mathcal{S}^c}+\delta\mathbf{I}=(\mathbf{ K}_{\mathcal{S}^c}-\mathbf{ K}_{\mathcal{S}^c\mathcal{S}}(\mathbf{ K}_{\mathcal{S}})^{-1}\mathbf{ K}_{\mathcal{S}^c\mathcal{S}}^T)^{-1}.
\label{eq:12}
\end{equation}

Fig. \ref{fig:L} considers a toy example that uses a synthesized simple graph for the sake of clarity. From \eqref{eq:12}, we can rewrite the entropy criterion in \eqref{eq:en3} as:
\begin{equation}
\begin{split}
y^*\leftarrow\argmax_{y\in \mathcal{S}_m^c}\frac{1}{{L}^y(y,y)+\delta^y},
\label{eq:eng}
\end{split}
\end{equation}
where $\mathbf{L}^y$ is the Laplacian matrix of the graph with the vertices ${\mathcal{S}_m}\cup y$ and the edges between these vertices (Fig. \ref{fig:L} (b)), and $\delta^y$ is the variance of $\bm{f}_{\mathcal{S}_m\cup y}$. It can be seen that the entropy criterion selects a vertex that has the minimum degree with the selected vertices, i.e., the vertex with the weakest connection with the already-selected vertices is selected. Because of this, the entropy criterion often places many vertices at the corners or boundaries of the space, as is well known. 

The MI criterion in \eqref{eq:mi3} can also be rewritten as
\begin{equation}
y^*\leftarrow
\argmax_{y\in \mathcal{S}_m^c}\frac{\widebar{L}^y(y,y)+\widebar\delta^y}{{L}^y(y,y)+\delta^y},
\end{equation}
where $\widebar{\mathbf{L}}^y$ is the graph Laplacian containing the unselected vertices ${\mathcal S}_m^c$ and the edges in ${\mathcal S}_m^c$ (Fig. \ref{fig:L} (c)), and $\widebar\delta^y$ is the variance of $\bm{f}_{\mathcal{S}_m^c}$. It can be observed that the MI criterion chooses the vertex that has the weakest connection with the selected vertices and strongest connection with the unselected vertices.

\begin{figure}[tp]
  \centering
  \subfigure[][]{\includegraphics[width=.25\linewidth]{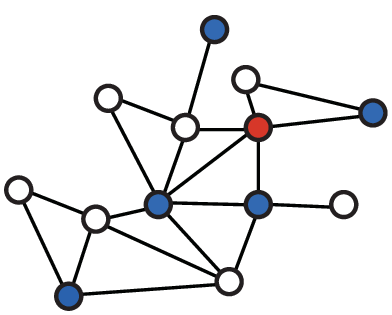}}\ \ 
    \subfigure[][]{\includegraphics[width=.25\linewidth]{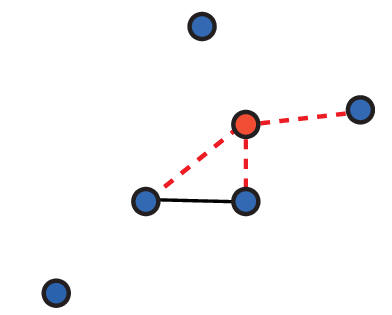}}\ \ 
  \subfigure[][]{\includegraphics[width=.25\linewidth]{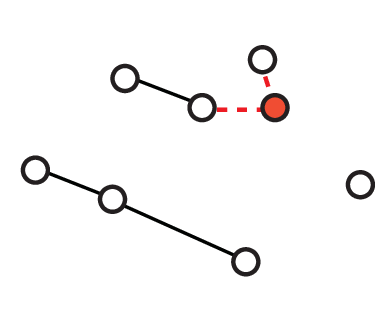}}
  \caption{(a) Original graph. The blue vertices and red vertex indicate $\mathcal{S}$ and $y$, respectively. (b) $\mathbf{L}^y$. $L^y(y,y)$ is the total weight of the red dashed edges. (c) $\bar{\mathbf{L}}^y$. $\bar{L}^y(y,y)$ is the total weight of the red dashed edges. }
  \label{fig:L}
\end{figure}

It is worth noting that the conventional entropy and MI criterion select vertices according to the edge information in the \textit{graph vertex domain}, whereas the sampling methods based on the graph Fourier basis including the proposed method select vertices while considering the spectrum in the \textit{graph frequency domain}.

\subsection{Graph Sampling Based on Fourier Basis}
\subsubsection{Based on Cutoff Frequency (MaxCutoff)}
The reconstruction algorithm in \eqref{eq3} with the ideal kernel, i.e., $\mathbf{ T}=\mathbf{ U}\text{diag}(\bm{1}_{\mathcal{F}})\mathbf{ U}^*$, coincides with that of the conventional methods in \eqref{rec}:
\begin{equation}
\begin{split}
\widehat{\bm{f}}_k=&(\mathbf{ T}^k)_{{\mathcal V}{\mathcal S}}((\mathbf{ T}^k)_{{\mathcal S}})^{-1}\bm{f}_{\mathcal S}\\
=&\mathbf{ U}\text{diag}(\bm{1}_{\mathcal{F}})\mathbf{ U}_{{\mathcal S}{\mathcal V}}^*(\mathbf{ U}_{{\mathcal S}{\mathcal V}}\text{diag}(\bm{1}_{\mathcal{F}})\mathbf{ U}_{{\mathcal S}{\mathcal V}}^*)^{-1}\bm{f}_{\mathcal S}\\
=&\mathbf{ U}_{{\mathcal V}{\mathcal F}}\mathbf{ U}_{{\mathcal S}{\mathcal F}}^*(\mathbf{ U}_{{\mathcal S}{\mathcal F}}\mathbf{ U}_{{\mathcal S}{\mathcal F}}^*)^{-1}\bm{f}_{\mathcal S}\\
=&\mathbf{ U}_{{\mathcal V}{\mathcal F}}\mathbf{ U}_{{\mathcal S}{\mathcal F}}^+\bm{f}_{\mathcal S}.
\end{split}
\end{equation}
Furthermore, the objective function in \eqref{gs1} can be rewritten as
\begin{equation}
\begin{split}
\mathcal{S}^*=&\argmax_{\mathcal{S}\subset\mathcal{V}:|\mathcal{S}|=F}\mu_\text{min}(\mathbf{ L}^k)_{\mathcal{S}^c}\\
=&\argmin_{\mathcal{S}\subset\mathcal{V}:|\mathcal{S}|=F}\mu_\text{max}(((\mathbf{ L}^k)_{\mathcal{S}^c})^{-1})\\
=&\argmin_{\mathcal{S}\subset\mathcal{V}:|\mathcal{S}|=F}\|((({\mathbf{ T}^{\text{L}}})^{k})_{\mathcal{S}^c})^{-1}\|_2,\\
\end{split}
\label{conv1}
\end{equation}
where $\mathbf{ T}^{\text{L}}=\mathbf{ L}$ is the localization operator matrix with $g(\lambda_i)=\lambda_i$. 

\subsubsection{Based on Error Minimization}
\begin{itemize}
\item\textit{MinSpec:}
\eqref{gs2-1} is rewritten as
\begin{equation}
\begin{split}
\mathcal{S}^*&=\argmax_{\mathcal{S}\subset\mathcal{V}:|\mathcal{S}|=F}\sigma_\text{min}(\mathbf{ U}_{\mathcal{S}\mathcal{F}})\\
&=\argmax_{\mathcal{S}\subset\mathcal{V}:|\mathcal{S}|=F}\sigma_\text{min}(\mathbf{ D}_\text{ver}\mathbf{ U}\text{diag}(\bm{1}_{\mathcal{F}}))\\
&=\argmax_{\mathcal{S}\subset\mathcal{V}:|\mathcal{S}|=F}\sigma_\text{min}(\text{diag}(\bm{1}_{\mathcal{F}})\mathbf{ U}^*\mathbf{ D}_\text{ver})\\
&=\argmax_{\mathcal{S}\subset\mathcal{V}:|\mathcal{S}|=F}\sigma_\text{min}(\mathbf{ U}\text{diag}(\bm{1}_{\mathcal{F}})\mathbf{ U}^*\mathbf{ D}_\text{ver})\\
&=\argmax_{\mathcal{S}\subset\mathcal{V}:|\mathcal{S}|=F}\sigma_\text{min}(\mathbf{ T}^\text{I}_\mathcal{VS})=\|\mathbf{ T}^\text{I}_\mathcal{VS}\|_2,
\end{split}
\label{conv2}
\end{equation}
where $\mathbf{ T}^\text{I}=\mathbf{ U}\text{diag}(\bm{1}_{\mathcal{F}})\mathbf{ U}^*$ which is the localization operator matrix with the ideal filter: $g(\lambda_i)=1$ for $\lambda_i\in\mathcal{F}$ and $0$ otherwise.
\item\textit{MinTrac:}\eqref{gs2-2} is also rewritten as
\begin{equation}
\begin{split}
\mathcal{S}^*&=\argmin_{\mathcal{S}\subset\mathcal{V}:|\mathcal{S}|=F}\text{tr}[(\mathbf{ U}_{\mathcal{S}\mathcal{F}}\mathbf{ U}_{\mathcal{S}\mathcal{F}}^*)^{-1}]\\
&=\argmin_{\mathcal{S}\subset\mathcal{V}:|\mathcal{S}|=F}\text{tr}[(\mathbf{ T}^\text{I}_\mathcal{S})^{-1}].\\
\end{split}
\label{conv2}
\end{equation}
\end{itemize}

\subsubsection{Based on Localized Basis}
\begin{enumerate}
\renewcommand{\labelenumi}{(\roman{enumi})}
\item \textit{MinFrob:} \eqref{gs3-1} is rewritten as 
\begin{equation}
\begin{split}
\mathcal{S}^*&=\argmin_{\mathcal{S}\subset\mathcal{V}:|\mathcal{S}|=F}\|(\text{diag}(\bm{1}_{\mathcal{F}})\mathbf{ U}^*\mathbf{ D}_\text{ver})^+\|_F\\
&=\argmin_{\mathcal{S}\subset\mathcal{V}:|\mathcal{S}|=F}\sum_{i=0}^{|\mathcal{F}|}\frac{1}{\sigma_i(\text{diag}(\bm{1}_{\mathcal{F}})\mathbf{ U}^*\mathbf{ D}_\text{ver})}\\
&=\argmin_{\mathcal{S}\subset\mathcal{V}:|\mathcal{S}|=F}\sum_{i=0}^{|\mathcal{F}|}\frac{1}{\sigma_i(\mathbf{ T}^\text{I}_\mathcal{VS})}\\
&=\argmin_{\mathcal{S}\subset\mathcal{V}:|\mathcal{S}|=F}\|(\mathbf{ T}^\text{I}_\mathcal{VS})^+\|_F\\
&=\argmin_{\mathcal{S}\subset\mathcal{V}:|\mathcal{S}|=F}\text{tr}[(\mathbf{ T}^\text{I}_\mathcal{S})^{-1}].
\end{split}
\label{conv3}
\end{equation}
\item \textit{MaxFrob:} Similar to (i), \eqref{gs3-2} is rewritten as
\begin{equation}
\begin{split}
\mathcal{S}^*&=\argmax_{\mathcal{S}\subset\mathcal{V}:|\mathcal{S}|=F} \|\text{diag}(\bm{1}_{\mathcal{F}})\mathbf{ U}^*\mathbf{ D}_\text{ver}\|_F\\
&=\argmax_{\mathcal{S}\subset\mathcal{V}:|\mathcal{S}|=F}\|\mathbf{ T}_\mathcal{VS}^\text{I}\|_F\\
&=\argmax_{\mathcal{S}\subset\mathcal{V}:|\mathcal{S}|=F}\text{tr}[\mathbf{ T}^\text{I}_\mathcal{S}].\\
\end{split}
\label{conv4}
\end{equation}
\item \textit{MaxPVol:} \eqref{gs3-3} is rewritten as
\begin{equation}
\begin{split}
\mathcal{S}^*&=\argmax_{\mathcal{S}\subset\mathcal{V}:|\mathcal{S}|=F} \det[\mathbf{ U}_{\mathcal{SF}}\mathbf{ U}_{\mathcal{SF}}^*]\\
&=\argmax_{\mathcal{S}\subset\mathcal{V}:|\mathcal{S}|=F}\det[\mathbf{ D}_\text{ver}\mathbf{ U}\text{diag}(\bm{1}_{\mathcal{F}})(\mathbf{ D}_\text{ver}\mathbf{ U}\text{diag}(\bm{1}_{\mathcal{F}}))^*]\\
&=\argmax_{\mathcal{S}\subset\mathcal{V}:|\mathcal{S}|=F}\det[\mathbf{ T}_\mathcal{S}^\text{I}].\\
\end{split}
\label{conv5}
\end{equation}
\end{enumerate}

The objective functions represented by the localization operators are summarized in Table \ref{tab:con}. Because $(\mathbf{T}^\text{I})^k=\mathbf{T}^\text{I}$, the conventional SSS methods shown in \eqref{conv2}--\eqref{conv5} coincide with the proposed SSS based on the error minimization, which is introduced in Table \ref{tab:od}, in the case of using the ideal kernel for the localization operator. Furthermore, MaxCutoff, as shown in \eqref{conv1}, can also be viewed as the objective function for minimizing the error covariance matrix caused by the reconstruction shown in \eqref{eq3}. 
\begin{proposition}
MaxCutoff can be viewed as the error minimization for signal reconstruction using \eqref{eq3} with $\mathbf{T}=(\mathbf{L}+\delta{\bf I})^{-1}$, in the case in which $\delta$ goes to zero.\footnote{$\delta$ prevents the precision matrix from being singular.}.
\end{proposition}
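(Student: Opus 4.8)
The plan is to turn the chain in \eqref{conv1} around. That equation already writes MaxCutoff as $\mathcal{S}^{*}=\argmin_{\mathcal{S}}\|(((\mathbf{T}^{\text{L}})^{k})_{\mathcal{S}^{c}})^{-1}\|_{2}$ with $\mathbf{T}^{\text{L}}=\mathbf{L}$, i.e.\ as minimizing the spectral norm of the inverse of a principal block of $\mathbf{L}^{k}$; the goal is to recognize $((\mathbf{L}^{k})_{\mathcal{S}^{c}})^{-1}$ as (the $\mathcal{S}^{c}$-block of) the reconstruction-error covariance of \eqref{eq3} when the localization kernel is $g(\lambda)=(\lambda+\delta)^{-1}$, that is, $\mathbf{T}=(\mathbf{L}+\delta\mathbf{I})^{-1}$, as $\delta\to0$. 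The link is the precision/covariance duality \eqref{eq:invc}: with $\mathbf{K}=(\mathbf{L}+\delta\mathbf{I})^{-1}$ and $k=1$, the estimator \eqref{eq3} becomes $\widehat{\bm{f}}=\mathbf{K}_{\mathcal{V}\mathcal{S}}\mathbf{K}_{\mathcal{S}}^{-1}\bm{f}_{\mathcal{S}}$, whose restriction to the unsampled vertices is exactly the Gaussian-process posterior mean $\mathbf{K}_{\mathcal{S}^{c}\mathcal{S}}\mathbf{K}_{\mathcal{S}}^{-1}\bm{f}_{\mathcal{S}}$, so the covariance of the reconstruction error on $\mathcal{S}^{c}$ is the conditional covariance $\mathbf{K}_{\mathcal{S}^{c}\mid\mathcal{S}}$.

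I would then feed this into the block-matrix inversion formula \eqref{eq:invk}, which is precisely what produced \eqref{eq:12}, i.e.\ $\mathbf{K}_{\mathcal{S}^{c}\mid\mathcal{S}}=(\mathbf{L}_{\mathcal{S}^{c}}+\delta\mathbf{I})^{-1}$. Hence the E-optimal criterion of Table~\ref{tab:od}, $\argmin_{\mathcal{S}}\mu_{\max}(\mathbf{E})$, turns into $\argmin_{\mathcal{S}}\mu_{\max}\!\big((\mathbf{L}_{\mathcal{S}^{c}}+\delta\mathbf{I})^{-1}\big)=\argmax_{\mathcal{S}}\big(\mu_{\min}(\mathbf{L}_{\mathcal{S}^{c}})+\delta\big)=\argmax_{\mathcal{S}}\mu_{\min}(\mathbf{L}_{\mathcal{S}^{c}})$, which is \eqref{gs1} with $k=1$; the shift by $\delta$ does not move the $\argmax$, so for $k=1$ the equivalence is exact for every $\delta>0$ and ``$\delta\to0$'' merely expresses that $\mathbf{T}$ should represent $\mathbf{L}$ itself (the role of $\delta$ in the footnote). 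For general $k$ I would run the same manipulation with $\mathbf{T}^{k}=(\mathbf{L}+\delta\mathbf{I})^{-k}$ in \eqref{eq3}: since $(\lambda_{i}+\delta)^{-k}\to\lambda_{i}^{-k}$ for every $\lambda_{i}>0$ while the DC mode $\lambda_{0}=0$ is always kept, $\mathbf{T}^{k}$ agrees with $(\mathbf{L}^{k})^{+}$ on $\mathrm{range}(\mathbf{L})$ in the limit, and applying the Schur-complement identity to $(\mathbf{L}+\delta\mathbf{I})^{k}$ (using that every strict principal submatrix of $\mathbf{L}^{k}$ is positive definite on a connected graph, because $\ker\mathbf{L}^{k}=\ker\mathbf{L}=\mathrm{span}(\bm{1})$) should reduce $\mu_{\max}(\mathbf{E})$ to $1/\mu_{\min}((\mathbf{L}^{k})_{\mathcal{S}^{c}})$, matching \eqref{conv1}.

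The hard part is the $\delta\to0$ limit for $k>1$. Restriction and matrix powers do not commute, so $\big((\mathbf{L}+\delta\mathbf{I})^{k}\big)_{\mathcal{S}^{c}}\neq(\mathbf{L}_{\mathcal{S}^{c}}+\delta\mathbf{I})^{k}$, and the Schur-complement correction $-(\mathbf{L}^{k})_{\mathcal{S}^{c}\mathcal{S}}\big((\mathbf{L}^{k})_{\mathcal{S}}\big)^{-1}(\mathbf{L}^{k})_{\mathcal{S}\mathcal{S}^{c}}$ does not vanish as $\delta\to0$; so one must either argue it is sub-dominant for the eigenvalue that drives the criterion, or lean on the fact that $k$ is only a tunable approximation of the ideal band-limiting, so MaxCutoff is recovered in that (approximate) sense rather than verbatim. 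A second delicate point is that the kernel is singular at DC, $g^{k}(\lambda_{0})=\delta^{-k}\to\infty$, so I would have to check that $(\mathbf{T}^{k})_{\mathcal{V}\mathcal{S}}((\mathbf{T}^{k})_{\mathcal{S}})^{-1}$ and the error covariance stay finite in the limit (they do: the blow-up lies along $\bm{1}$, which is in the band-limited subspace and is pinned down by any nonempty sample set). Finally, one should confirm that the noise-propagation covariance \eqref{eqn:errorcov} and the posterior covariance $\mathbf{K}_{\mathcal{S}^{c}\mid\mathcal{S}}$ give the same optimizer in the limit; this is expected because both are controlled by the spectrum of $(\mathbf{L}^{k})_{\mathcal{S}^{c}}$, but it is the step I would be least confident about writing down cleanly.
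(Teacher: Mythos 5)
Your $k=1$ argument via the GP posterior mean and the Schur-complement identity \eqref{eq:12} is correct as far as it goes, but the proposition concerns MaxCutoff with its general power $k$, and that is precisely the case you leave open. You correctly diagnose the two obstacles --- restriction and matrix powers do not commute, and the noise-propagation covariance \eqref{eqn:errorcov} is not literally the conditional covariance $\mathbf{K}_{\mathcal{S}^c\mid\mathcal{S}}$ --- but you resolve neither, so the proposal does not yet prove the statement. The missing ingredient, and the one the paper's proof actually rests on, is the block-matrix singular value inequality of Govaerts and Pryce \cite{Govaer1989}: $A_T\,\sigma_\text{min}(\mathbf{T}_{\mathcal{S}})\leq\sigma_\text{min}\big((\mathbf{T}^{-1})_{\mathcal{S}^c}\big)\leq B_T\,\sigma_\text{min}(\mathbf{T}_{\mathcal{S}})$. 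Applied with $\mathbf{T}=(\mathbf{L}+\delta\mathbf{I})^{-1}$ and the power $k$, it ties the E-optimal objective of \eqref{minspec}, i.e.\ maximizing $\sigma_\text{min}((\mathbf{T}^{k})_{\mathcal{S}})$, directly to $\sigma_\text{min}\big(((\mathbf{L}+\delta\mathbf{I})^{k})_{\mathcal{S}^c}\big)$; note that this is the $\mathcal{S}^c$-restriction of the $k$th power, so the non-commutativity and the Schur-complement correction term you worried about never enter. Letting $\delta\to 0$ then yields $\argmax_{\mathcal{S}}\sigma_\text{min}((\mathbf{L}^{k})_{\mathcal{S}^c})$, which is exactly \eqref{gs1}, for every $k$ at once; no appeal to the GP posterior covariance is needed, because the chain starts from the reconstruction-error covariance \eqref{eqn:errorcov} itself.

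Two smaller remarks. First, your $k=1$ computation is a clean exact special case and nicely explains \emph{why} the result should hold (it is the precision/covariance duality behind \eqref{eq:12}), but since it optimizes $\mathbf{K}_{\mathcal{S}^c\mid\mathcal{S}}$ rather than \eqref{eqn:errorcov}, even that case would need the identification you flag in your last sentence before it counts as a proof of the proposition as stated. Second, you anticipate that the general-$k$ equivalence might hold only ``in an approximate sense''; this is in fact also true of the paper's argument, since the Govaerts--Pryce inequality gives a two-sided bound with constants $A_T,B_T$ rather than an identity of objectives --- but invoking that inequality is what turns the heuristic into the paper's proof, and it is the step absent from your proposal.
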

\noindent The proof is shown in Appendix D.

In summary, all of the existing SSS methods introduced in this paper can be viewed as special cases of the proposed SSS based on the error minimization with the different optimal criteria and kernels. 

\section{Experimental Results}
\label{sec:exp}

\subsection{Setup}
In the experiments, we used the following six graphs:
\begin{itemize}
    \item Random sensor graph.
    \item Random graph with Erd\H{o}s--R\'{e}nyi model (ER graph): The edge connecting probability was set to $0.05$.
    \item Random regular graph: Each vertex connects to six vertices.
    \item Random graph with Barab\'{a}si--Albert model (BA graph): The initial connected graph has six vertices.
    \item Community graph: $11$ communities with random sizes are yielded.
    \item Minnesota Traffic graph.
\end{itemize}
For the comparison of execution time, we selected the random sensor graph with different numbers of vertices. For the comparison of prediction errors, we used all six graphs with the number of vertices for random graphs is set to $N = 500$ and that for the Minnesota Traffic graph is $N = 2642$.

The performance of the proposed method is compared with the following approaches:
\begin{itemize}
    \item GP-based methods: Entropy- and MI-based criteria (abbreviated as Entropy and MI) \cite{Cressi2015, Shewry1987, Krause2008}
    \item Graph-based SSS methods with deterministic selection \cite{Chen2015, Anis2016, Tsitsv2016}
    \item Graph-based SSS using random sampling with nonuniform sampling probability distribution (abbreviated as RandSamp) \cite{Puy2018}
\end{itemize}
For the GP-based methods, we need to estimate the covariance matrix. Based on \eqref{eq:invc}, we simply set $\mathbf{K} = (\mathbf{L} + \delta\mathbf{I})^{-1}$, where $\delta = 0.01$. MaxCutoff needs the parameter $k$ for estimating the cutoff frequency, which is set to $k=14$. RandSamp also needs the parameter $\gamma$ for the reconstruction, which is set to $\gamma = 1$. Although the optimal $\gamma$ widely varies according to the graph used, $\gamma = 1$ is used for one of the experiments in \cite{Puy2018}. 

As the signal model for a realistic situation, we use noisy bandlimited signals where
\begin{equation}
    \widebar{\bm{f}} = [\widebar{\bm{f}}_{\text{bl}}^T,\ \mathbf{0}_{N-|\mathcal{F}|}^T]^T + \bm{\epsilon},
\end{equation}
in which $\widebar{\bm{f}}_{\text{bl}} \in \mathbb{R}^{|\mathcal{F}|}$ is a random vector of length $|\mathcal{F}|$ whose elements conform with $\mathcal{N}(0, 0.2)$, and $\bm{\epsilon} \in \mathbb{R}^N$ is an iid noise vector following $\mathcal{N}(0, 5 \times 10^{-3})$. We set $|\mathcal{F}| = 100$ for all the experiments.

\begin{figure}[t]
\centering
\includegraphics[width=\linewidth]{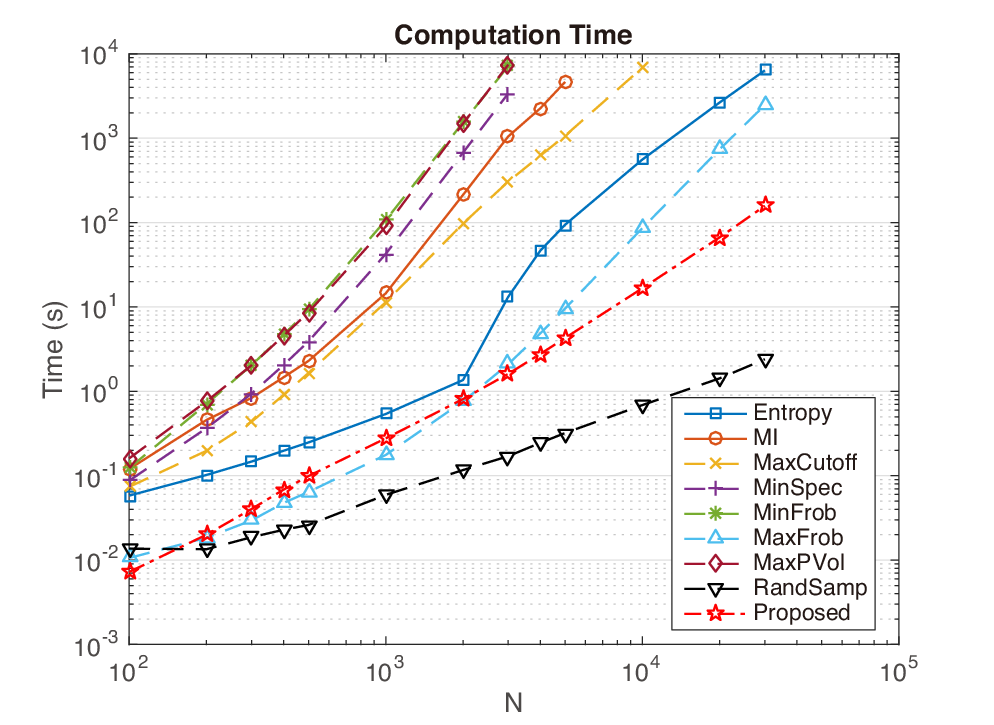}
\caption{Execution time comparison for random sensor graph. Note that both axes are represented on logarithmic scales.}
\label{fig:time}
\end{figure}

\begin{table}[t]
    \centering
    \caption{Speedup Factor of our Method with respect to Alternative Approaches for $N = 2000$}
    \label{tab:speedup_factor}
    \begin{tabular}{c|r}
    \hline
    Methods     &  \multicolumn{1}{c}{Speedup factor}\\\hline
    Entropy & $1.68$\\
    MI & $262.85$\\
    MaxCutoff & $120.24$\\
    MinSpec & $833.01$\\
    MinFrob & $1898.37$\\
    MaxFrob & $0.94$\\
    MaxPVol & $1843.99$\\
    RandSamp & $0.14$\\\hline
    \end{tabular}
\end{table}

The proposed method uses the kernel $g(\lambda)=\exp(-s\lambda)$ with $s = \nu p_e p_s p_f/\lambda_{\max}$, where $\nu \in \mathbb{R}_+$ is a parameter, $p_e := |\mathcal{E}|/N$ is the edge probability, $p_s := |\mathcal{S}|/N$ is the sampling ratio, and $p_f := |\mathcal{F}|/N$ is the (normalized) bandwidth. For all the experiments, $\nu$ was experimentally set to $\nu=220$.\footnote{The optimal $\nu$ differs for different graphs, but this $\nu$ works well for our experiments. The automatic parameter setting will be an interesting topic in the future.} During the selection process, $g(\lambda)$ is approximated with Chebyshev polynomial approximation with the order $P = 12$. For the signal prediction, we fixed $k=12$.

\begin{figure*}[tp]
\centering
 \subfigure[][Original graph]{\includegraphics[width=.2\linewidth]{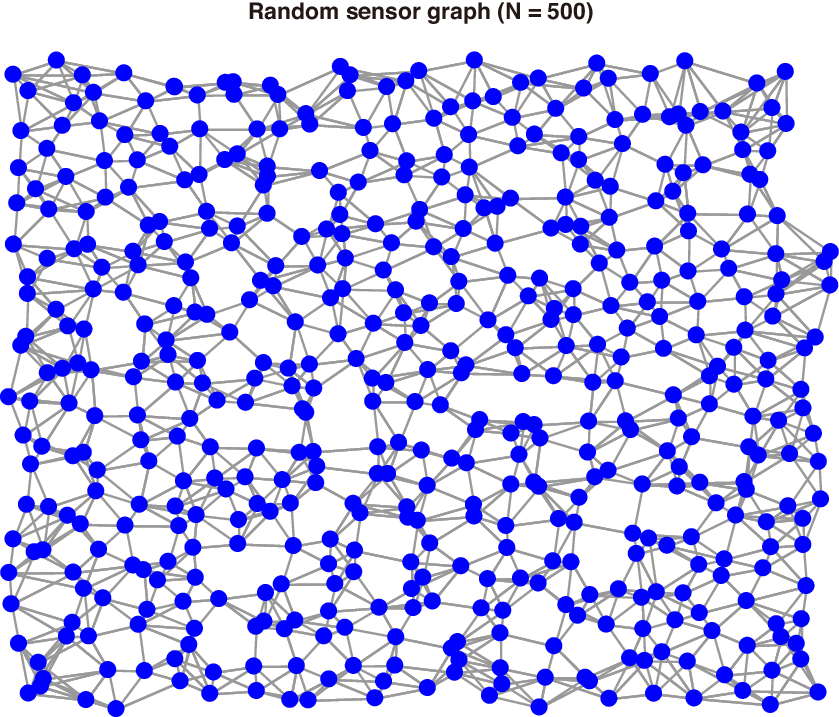}}
 \subfigure[][Entropy]{\includegraphics[width=.2\linewidth]{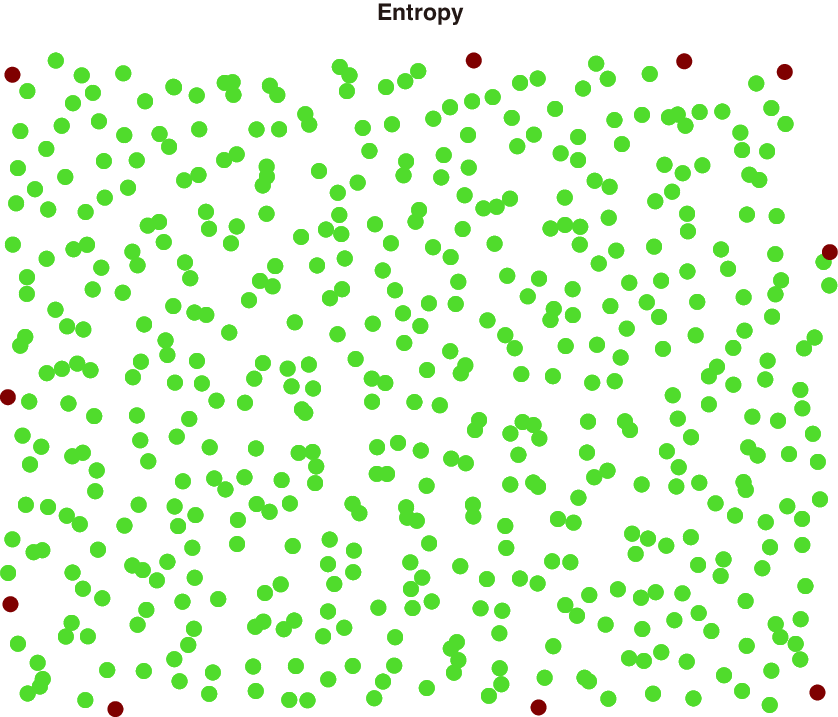}}
 \subfigure[][MI]{\includegraphics[width=.2\linewidth]{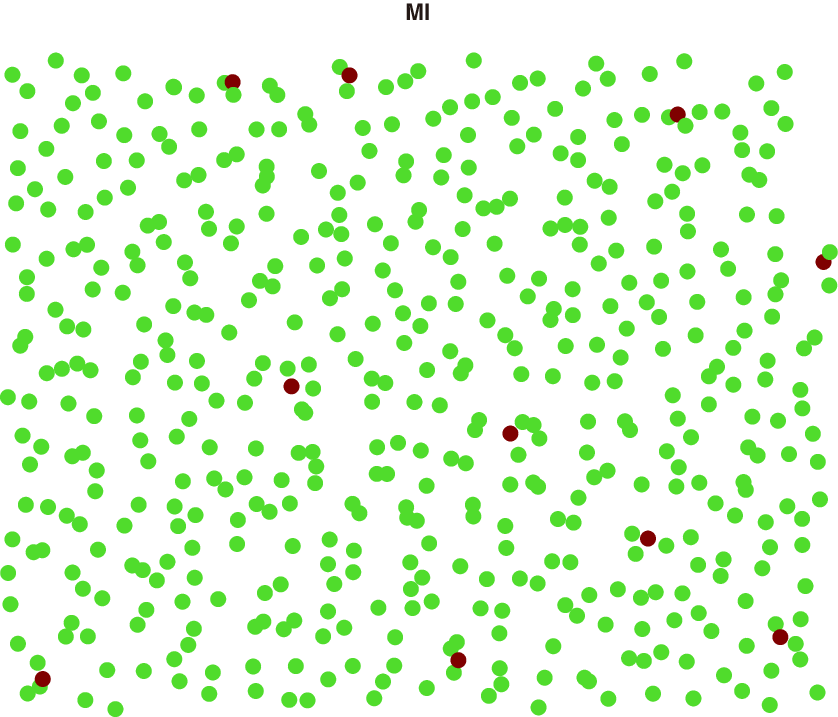}}\\
 \subfigure[][MaxCutoff]{\includegraphics[width=.2\linewidth]{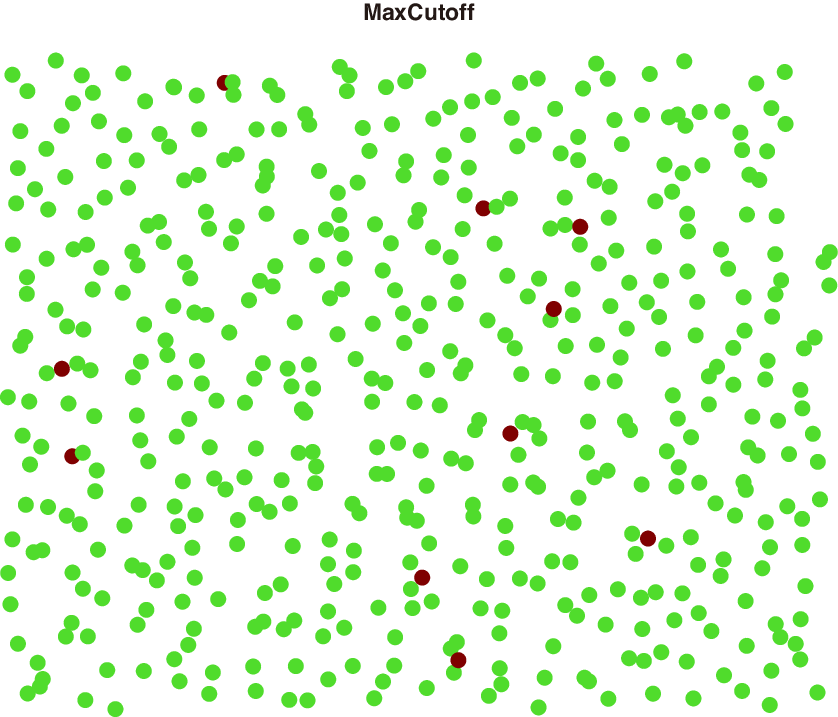}}
 \subfigure[][MinSpec]{\includegraphics[width=.2\linewidth]{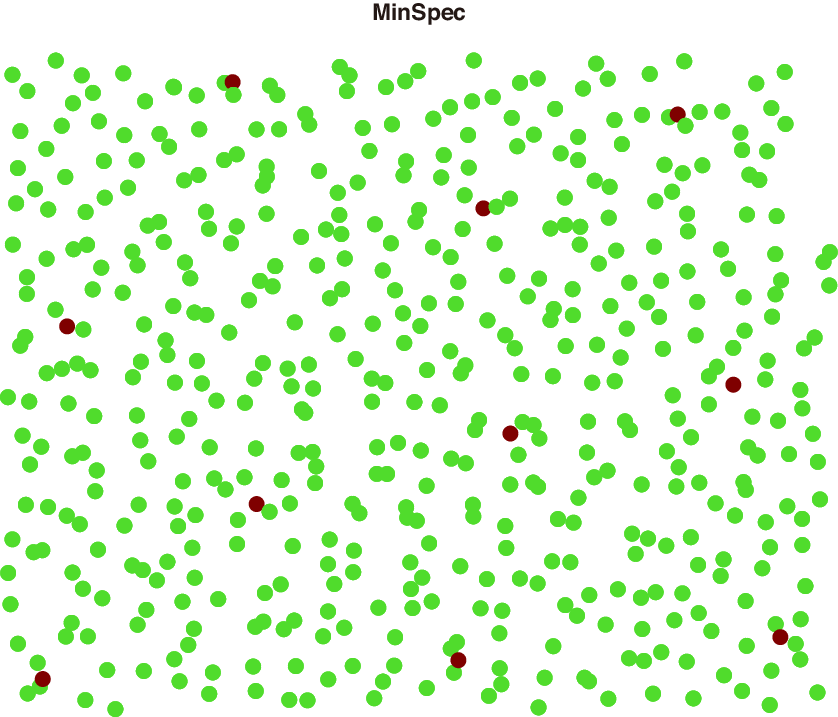}}
 \subfigure[][MinFrob]{\includegraphics[width=.2\linewidth]{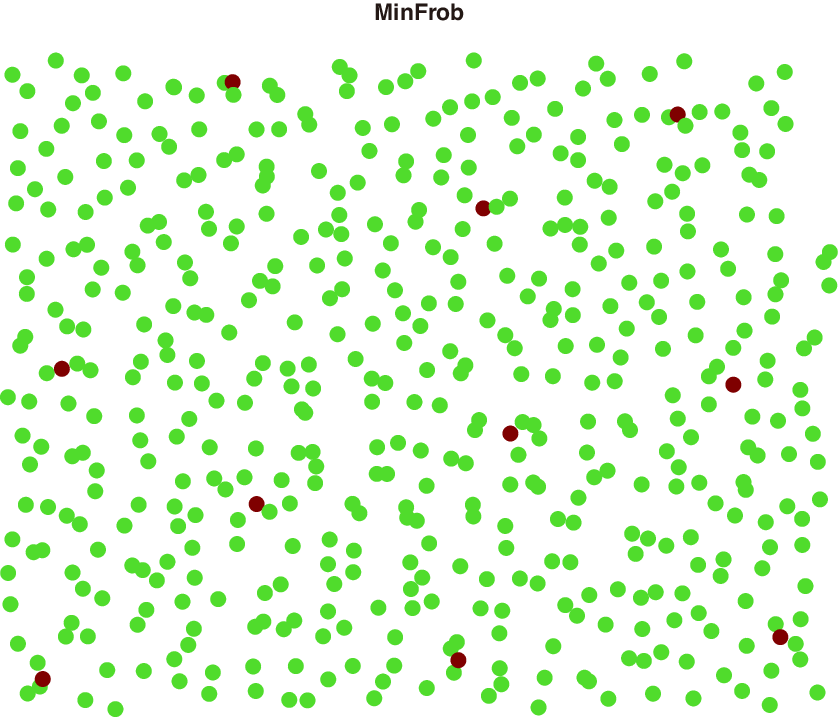}}
 \subfigure[][MaxFrob]{\includegraphics[width=.2\linewidth]{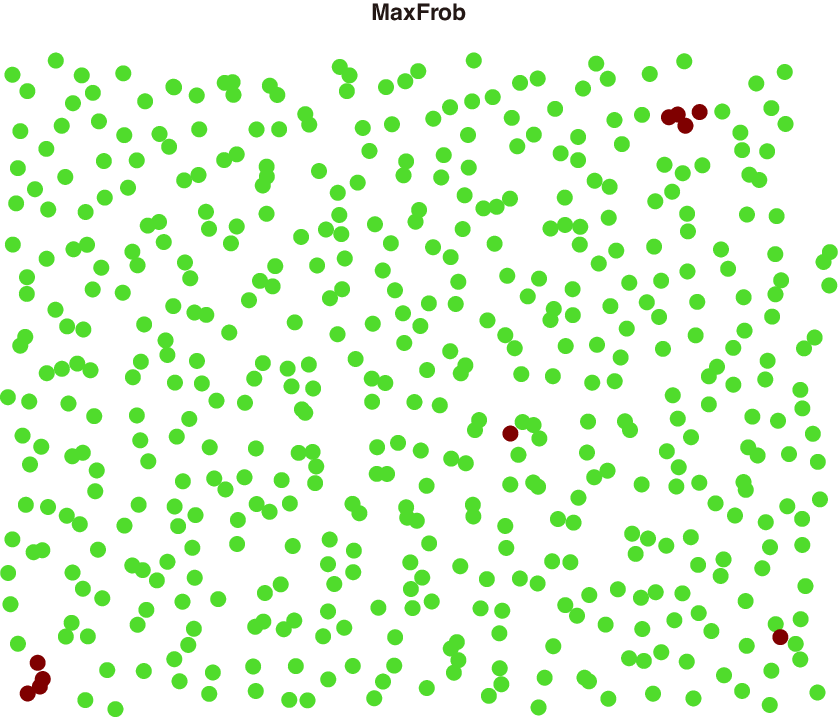}}\\
 \subfigure[][MaxPVol]{\includegraphics[width=.2\linewidth]{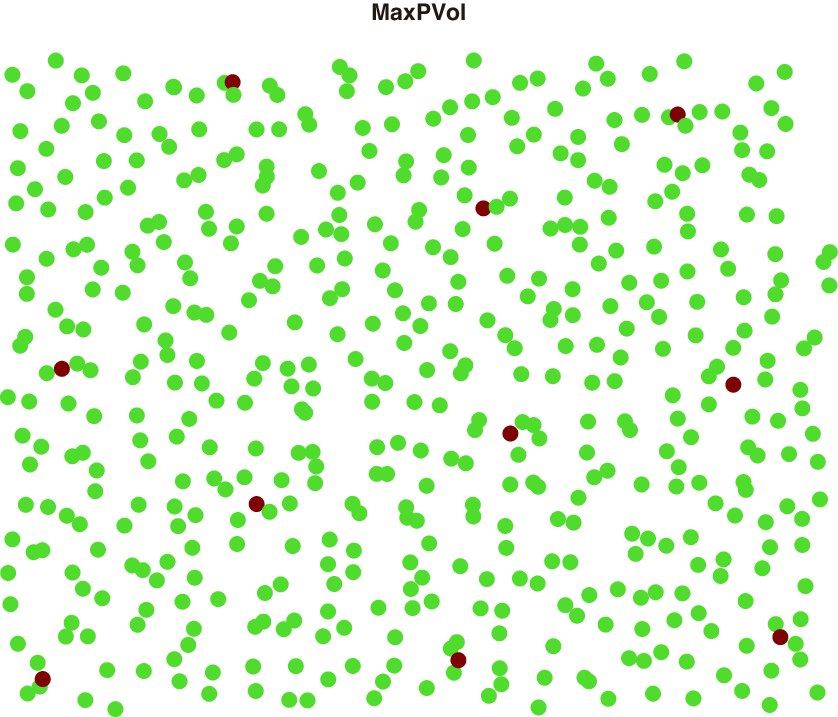}}
 \subfigure[][RandSamp (realization \#1)]{\includegraphics[width=.2\linewidth]{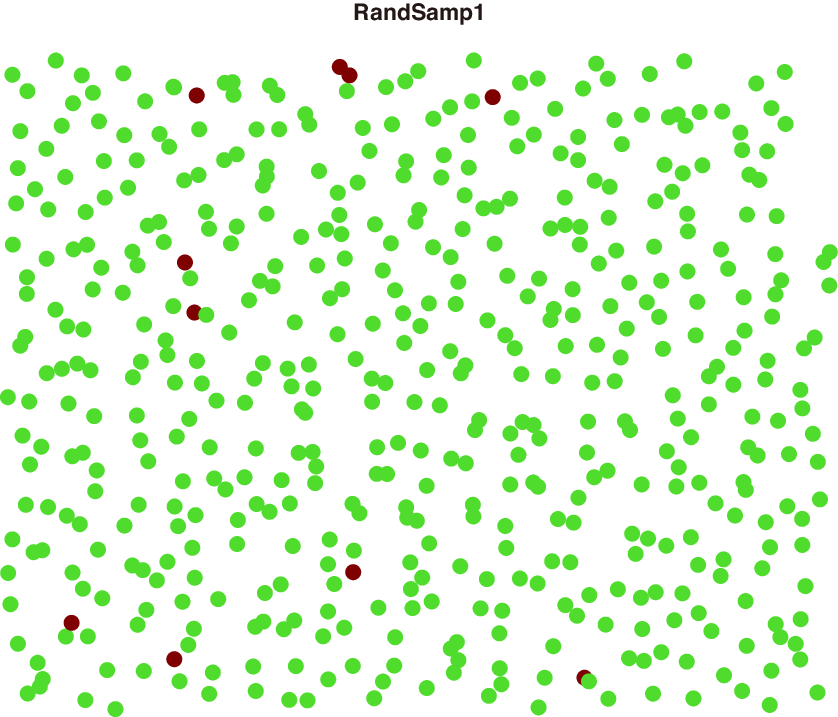}}
 \subfigure[][RandSamp (realization \#2)]{\includegraphics[width=.2\linewidth]{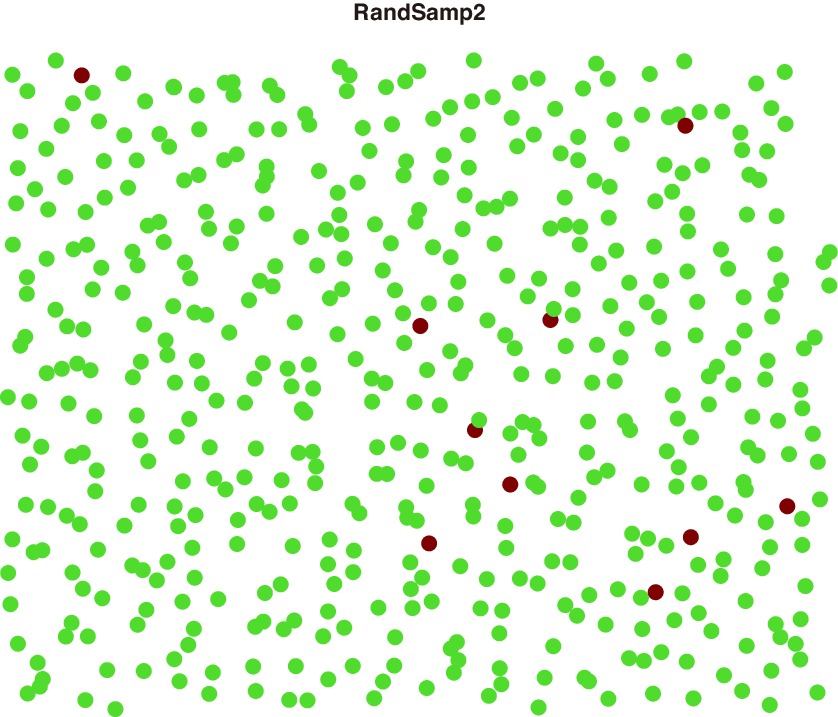}}
 \subfigure[][Proposed]{\includegraphics[width=.24\linewidth]{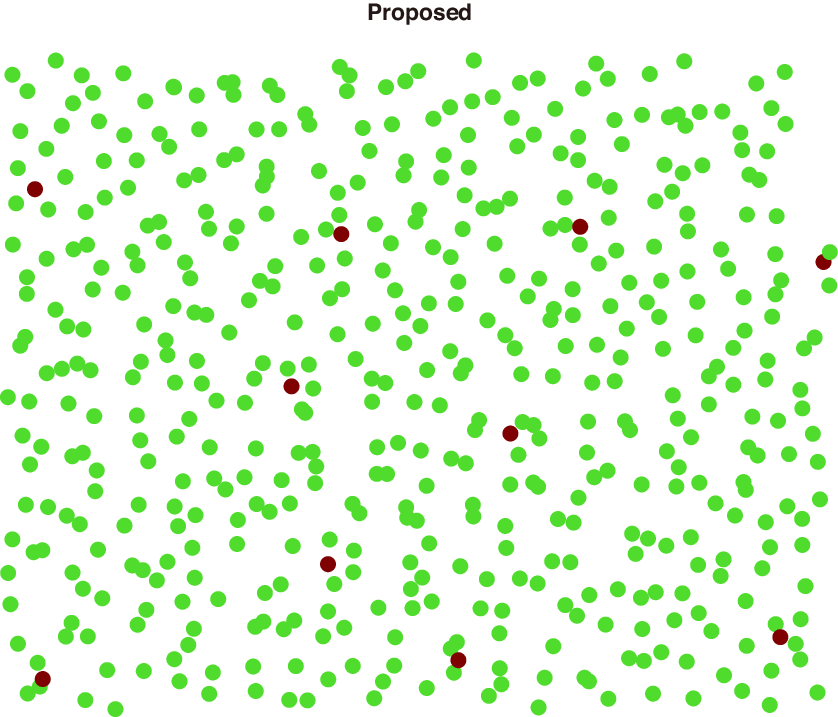}}
\caption{Selected vertices for a random sensor graph ($N = 500$). Ten vertices are selected, which are colored in red.}
\label{fig:node_sel}
\end{figure*}

All the experiments were performed in MATLAB R2017a, running on a PC with an Intel Xeon E5 3 GHz CPU and 64 GB RAM. The MATLAB toolbox for submodular function optimization \cite{Krause2010, Krause2007} was used for implementations of the entropy and MI criteria.

\subsection{Execution Time}
First, we compare the execution time for choosing $|\mathcal{S}|= N/10$ vertices for various $N$. Figure \ref{fig:time} shows the execution time comparison plotted against $N$ for the random sensor graph. The results are given by the average of $10$ independent runs.

Among the deterministic approaches, Entropy, MaxFrob, and the proposed method are faster than the other methods. Specifically, for $N= 2000$ (thus $|\mathcal{S}| = 200$), the speedup factor of our method with respect to an alternative method, i.e.,
\[
    \frac{\text{Comp. time of alternative method}}{\text{Comp. time of proposed method}}
\]
is summarized in Table \ref{tab:speedup_factor}. The proposed method is $>100$ times faster than the methods with high prediction accuracies (presented in the following sections): MI, MaxCutoff, MinSpec, MinFrob, and MaxPVol.

Although Entropy and MaxFrob are very fast compared to the other conventional methods and their computation times are comparable to that of the proposed method, their performances on SSS and the signal value prediction are significantly worse than the other methods. RandSamp is significantly faster than the other methods, including the proposed method, because it is a ``one-shot'' algorithm. However, on average, the prediction performance of the proposed method outperforms that of RandSamp. These results are further discussed in the next subsections.

\begin{figure*}[tp]
\centering
 \subfigure[][Random Sensor]{\includegraphics[width=.32\linewidth]{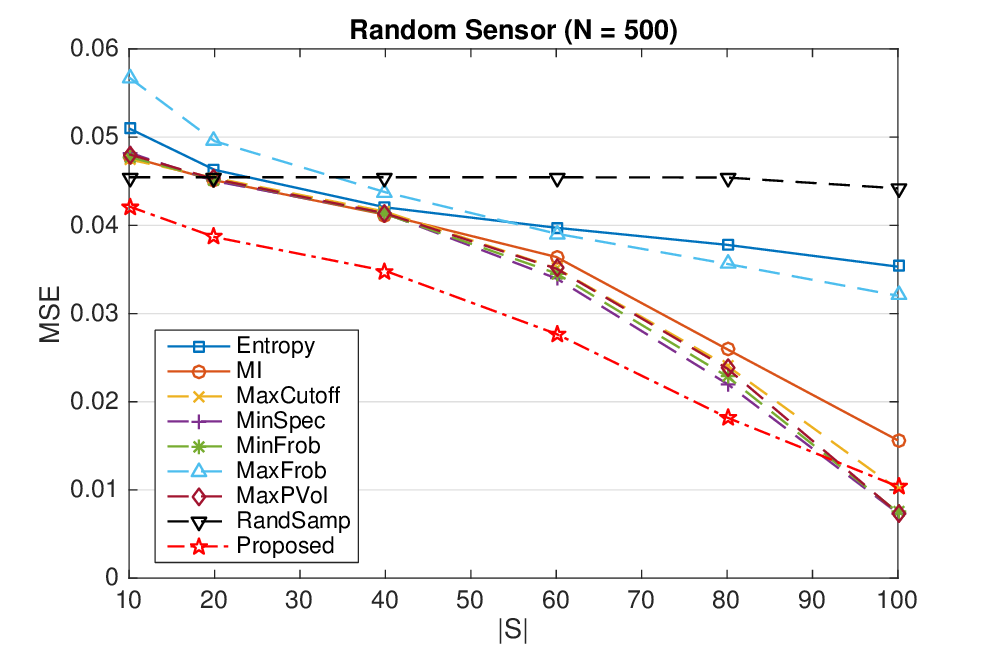}}
 \subfigure[][Erd\H{o}s--R\'{e}nyi]{\includegraphics[width=.32\linewidth]{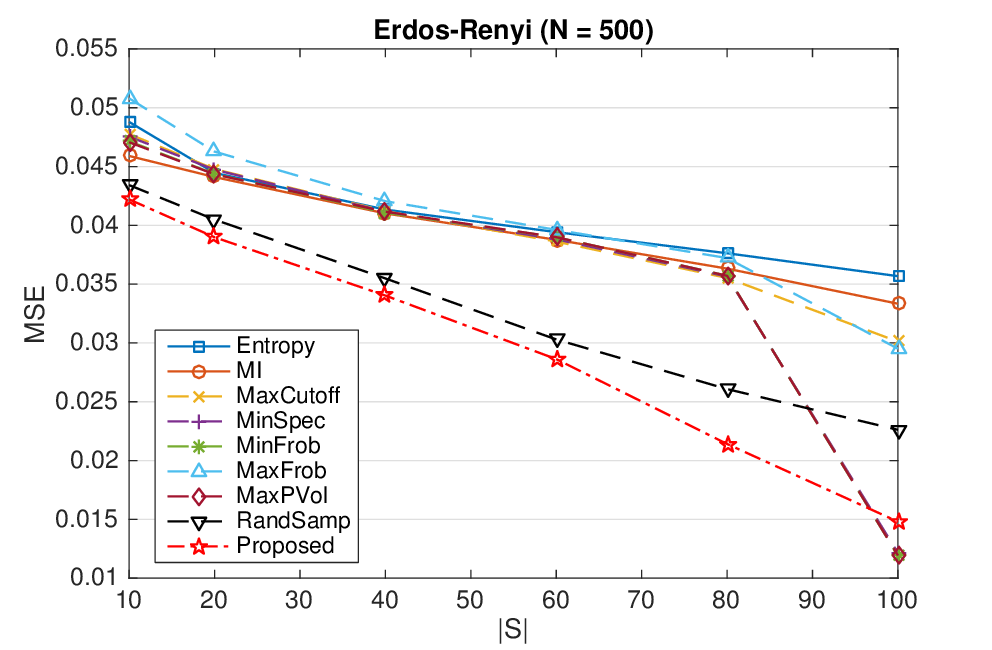}}
 \subfigure[][Random Regular]{\includegraphics[width=.32\linewidth]{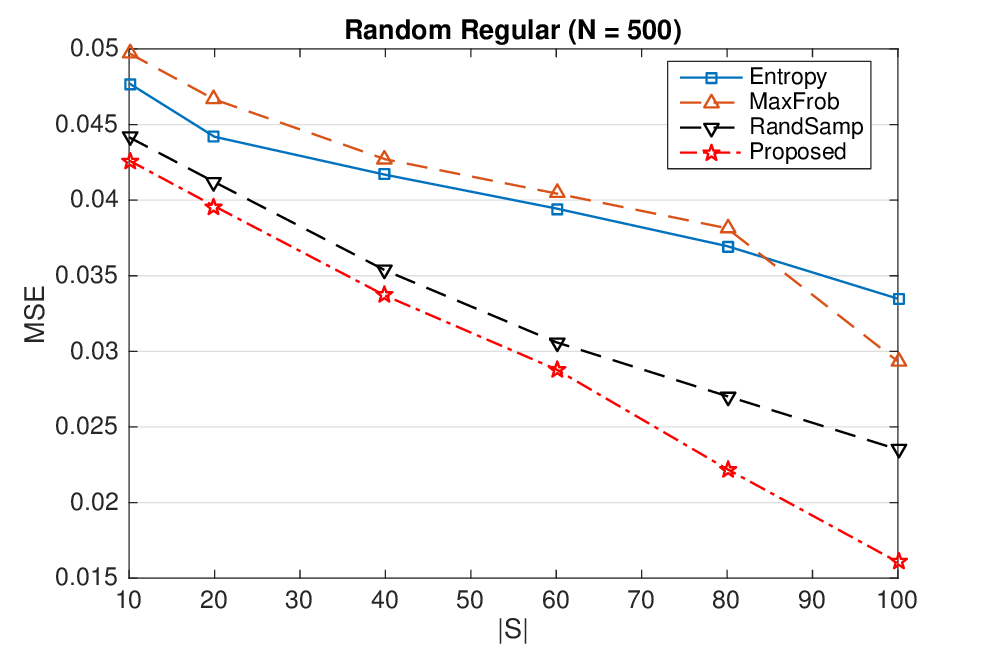}}\\
 \subfigure[][Barab\'{a}si--Albert]{\includegraphics[width=.32\linewidth]{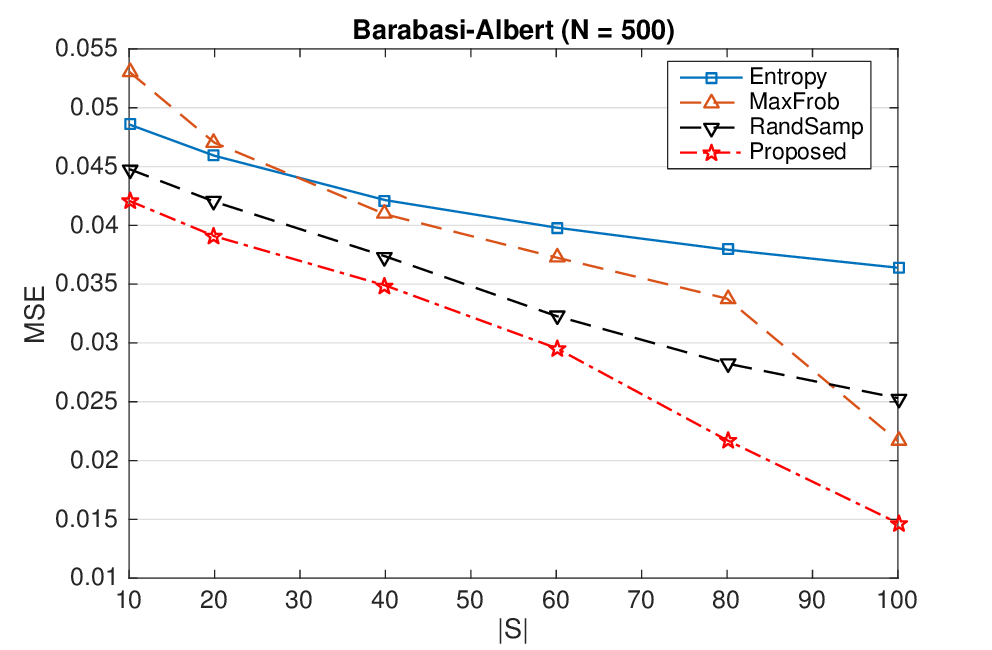}}
 \subfigure[][Community]{\includegraphics[width=.32\linewidth]{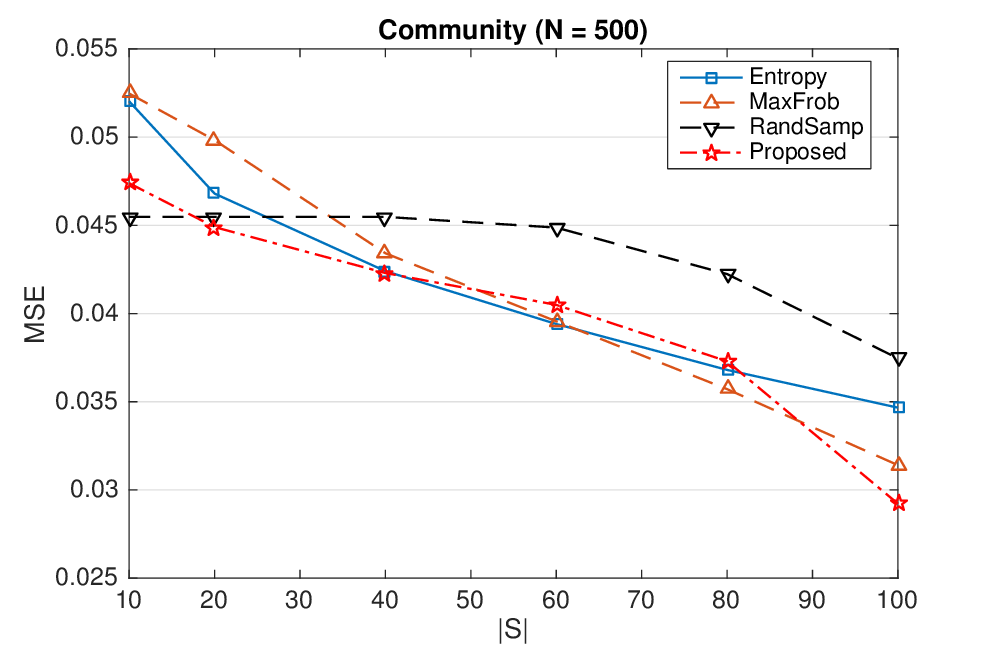}}
 \subfigure[][Minnesota Traffic]{\includegraphics[width=.32\linewidth]{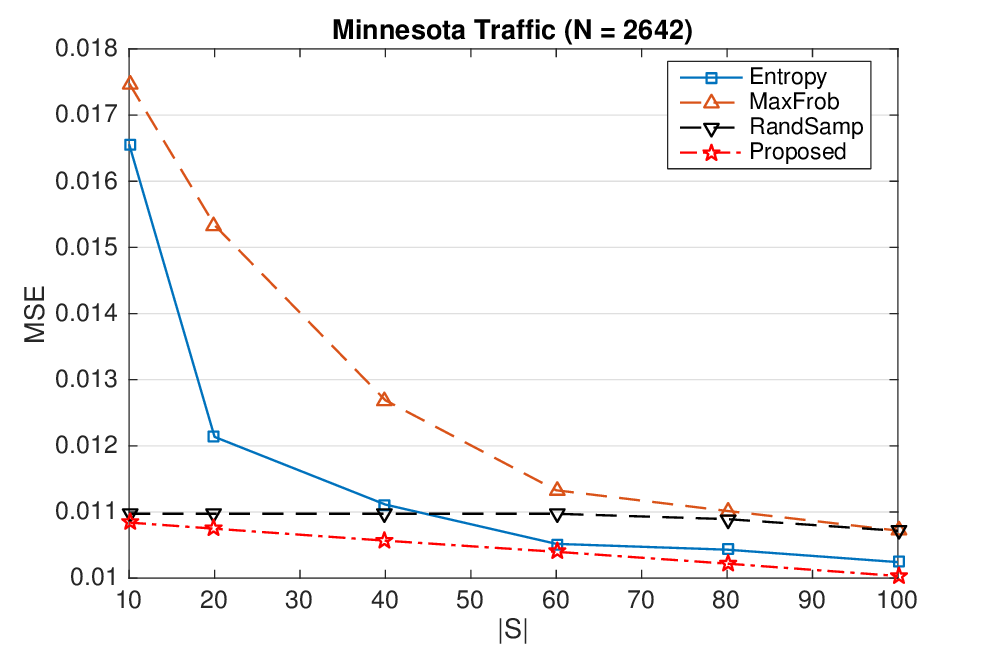}}
\caption{MSE comparison of estimated signals (average of $100$ tested signals).}
\label{fig:MSE}
\end{figure*}

\subsection{Comparison of Selected Vertices}
We show the vertices selected using the conventional and proposed approaches.
The original graph is a random sensor graph with $N=500$, and we select $|\mathcal{S}|=10$ vertices. The underlying graph and sampling results are shown in Fig. \ref{fig:node_sel}. Because RandSamp selects different vertices in every selection process, we show two sampling realizations from the same sampling distribution probability.

It can be observed that MI, MaxCutoff, MinSpec, MinFrob, MaxPVol, and the proposed method select evenly distributed sampling positions (and the proposed method is the fastest among the six). Entropy selects many vertices at the corners or boundaries of the graph because they have fewer connections with other vertices.
Because MaxFrob does not take into account the position of the already-selected vertices in each iteration, it often selects a vertex very close to an already selected one, as can be seen in Fig. \ref{fig:node_sel}(g) (some of them are almost overlapped). It therefore leads to large reconstruction errors, as shown in the next subsection.
Two sampling sets of RandSamp are quite different from each other, and sometimes vertices very close to each other are selected, as shown in Fig. \ref{fig:node_sel}(i).

\subsection{Comparison of Reconstruction Errors}
We compare the prediction errors between the proposed and conventional SSS methods.
Owing to the long execution times, all the SSS methods have been compared only for random sensor and ER graphs. For the other graphs, we compare the performances of the relatively fast methods: Entropy, MaxFrob, RandSamp, and the proposed method. The results are the average of $100$ runs.

The GP-based methods, MaxCutoff \cite{Anis2016} and MinSpec \cite{Chen2015} use the reconstruction shown in \eqref{rec} where $\mathcal{F}$ is the set of Laplacian eigenvalues less than $\lambda_{F}$ (for MinSpec \cite{Chen2015}) or the estimated cutoff frequency $\Omega_k(\mathcal{S})$ (for the other methods). For MaxCutoff, $k = 14$ is used because it has been effective for the reconstruction of noisy bandlimited signals \cite{Anis2016}. For GP-based methods, $k = 6$ is used because it presents better performance in our experiments.
MinFrob, MaxFrob, and MaxPVol \cite{Tsitsv2016} use the reconstruction shown in \eqref{rec_tes}, and the proposed method uses the reconstruction shown in \eqref{rec}.
RandSamp reconstructs the signal with the same method as the original paper \cite{Puy2018}: The quadratic equation with the Laplacian smoothness regularizer term. The regularizer function $g_{\text{reg}}(\la)$ we used is that proposed in \cite{Puy2018}, where $g_{\text{reg}}(\la) = \la^4$.

The average MSEs between the predicted and original signals are summarized in Fig. \ref{fig:MSE}. It is clear that the proposed method presents the lowest MSEs for almost all cases, with the exception of the community graph. The MSEs of RandSamp greatly depend on the specific graphs: It is good for the ER, random regular, and BA graphs, whereas its MSEs are not improved for the random sensor, community, and Minnesota Traffic graphs, even when we select a large number of vertices. This could be due to the reconstruction parameter $\gamma$. Graph sampling theory-based approaches perform well when enough samples, i.e., close to the cutoff frequency, are selected. In contrast, they have larger MSEs than the proposed method for small $|\mathcal{S}|$. 

\section{Conclusion}
\label{sec:con}
We proposed a SSS method based on the localization operator for graph signals. The proposed method has strong connections with the conventional GP-based sensor selection and the graph frequency-based SSS. The proposed SSS does not need the eigendecomposition of the graph Laplacian, whereas it still considers the graph frequency information as well as vertex information. It is significantly faster than the existing approaches, and its performance is better than those that are proven through numerical experiments.

\section*{Appendices}
\label{sec:con}
\subsection{Derivation of Conventional Approaches: Sensor Position Selection}
\subsubsection{Entropy}
In \eqref{eq:en}, the sensors are selected such that the uncertainty of a measurement with respect to previous measurements is maximized\cite{Cressi2015, Shewry1987,Sharma2015}:
\begin{equation}
\begin{split}
\mathcal{S}^*&=\argmin_{\mathcal{S}\subset\mathcal{V}:|\mathcal{S}|=F}H(\bm{f}_{\mathcal{S}^c}|\bm{f}_{\mathcal{S}})\\
&=\argmax_{\mathcal{S}\subset\mathcal{V}:|\mathcal{S}|=F}H(\bm{f}_{\mathcal{S}})=\argmax_{\mathcal{S}\subset\mathcal{V}:|\mathcal{S}|=F}\log\det{[\mathbf{ K}_{\mathcal{S}}]},
\end{split}
\end{equation}
where $H(\cdot)$ is (conditional) entropy. 

Because the problem in \eqref{eq:en} is NP-complete, a greedy algorithm has been proposed in \cite{Cressi2015, Shewry1987}. We first set $\mathcal{S}=\emptyset$ and add a sensor, which ensures the maximum increase in the uncertainty of the observed sensors, to $\mathcal{S}$ from the set of unselected sensors $\mathcal{S}^c$ one by one. 

The entropy of the random variable ${f}({y})$, where $y$ is the sensor of interest, conditioned on the variable $\bm{f}_{\mathcal{S}}$ is a monotonic function of its variance: 
\begin{equation}
H({f}({y})|\bm{f}_{\mathcal{S}})=\frac{1}{2}\log(2\pi e (\mathcal{K}(y,y)-\mathbf{ K}_{y\mathcal{S}}\mathbf{ K}_{\mathcal{S}}^{-1}\mathbf{ K}_{\mathcal{S}y})).
\label{conen}
\end{equation}
Hence, the vertex that satisfies \eqref{eq:en3} is selected at each step.

\subsubsection{MI}
\eqref{eq:mi1} maximizes the MI between the selected locations $\mathcal{S}$ and unselected locations $\mathcal{S}^c$, i.e., it selects the locations that more significantly reduce the uncertainty of the rest of the space\cite{Krause2008,Sharma2015}:
\begin{equation}
\begin{split}
\mathcal{S}^*&=\argmax_{\mathcal{S}\subset\mathcal{V}:|\mathcal{S}|=F}H(\bm{f}_{\mathcal{S}^c})-H(\bm{f}_{\mathcal{S}^c}|\bm{f}_{\mathcal{S}})\\
&=\argmax_{\mathcal{S}\subset\mathcal{V}:|\mathcal{S}|=F}\log\det{[\mathbf{ K}_{\mathcal{S}}]}+\log\det{[\mathbf{ K}_{\mathcal{S}^c}]}.
\label{eq:mi1_2}
\end{split}
\end{equation}
From \eqref{conen} and \eqref{eq:mi1_2}, a greedy method\cite{Krause2008} that adds sensor $y^*$ satisfying \eqref{eq:mi3} is used for the optimization.

\subsection{Derivation of Conventional Approaches: Graph Sampling Based on Graph Fourier Basis}

\subsubsection{Based on Cutoff Frequency (MaxCutoff)}
\cite{Anis2016} introduces a measure of quality for the sampling sets, namely the cutoff frequency, and selects the sampled vertices to maximize the cutoff frequency. It guarantees unique reconstruction and does not need the calculation of the graph Fourier basis.

The cutoff frequency associated with the subset $\mathcal{S}$ is a bound on the maximum frequency of a signal that can be perfectly recovered from the samples on the subset $\mathcal{S}$. Let us denote by $PW_{\omega}(\mathcal{G})\in\mathbb{R}^N$ the \textit{Paley--Wiener space} which is the space of all $\omega$-bandlimited signals, by $L_2(\mathcal{S}^c)$ the space of signals with zero values on $\mathcal{S}$, i.e., if $\bm{\phi}\in L_2(\mathcal{S}^c)$ then $\bm{\phi}=[\bm{f}_{\mathcal{S}^c}^T\ \mathbf{0}^T]^T$, and by $\omega(\bm{\phi})$ the minimum eigenvalue of $\bm{\phi}$ that have non-zero graph Fourier coefficients. 
\cite{Anis2014} states the sampling theorem for graph signals as follows.
\begin{theorem}[Graph Sampling Theorem{\cite[Theorem 2]{Anis2014}}]
The signal on a graph can be perfectly reconstructed from signal values $\bm{f}_{\mathcal{S}}$ on $\mathcal{S}$
if and only if $\bm{f}\in PW_{\omega}(\mathcal{G})$, where 
\begin{equation}
\omega<\omega_c(\mathcal{S}):=\inf_{\bm{\phi}\in L_2(\mathcal{S}^c)}\omega(\bm{\phi}),
\end{equation}
and $\omega_c(\mathcal{S})$ is the exact cutoff frequency.
\end{theorem}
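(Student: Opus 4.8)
I read the statement as the usual graph sampling theorem: every signal in $PW_\omega(\mathcal{G})$ is uniquely determined by its samples $\bm{f}_{\mathcal{S}}$ if and only if $\omega<\omega_c(\mathcal{S})$. The plan is to recast this as a statement about injectivity of the sampling (restriction) operator on the Paley--Wiener space and then identify the exact threshold. Write $\mathcal{F}(\omega)=\{i:\lambda_i\le\omega\}$, so that $PW_\omega(\mathcal{G})=\{\mathbf{U}_{\mathcal{V}\mathcal{F}}\bm{c}:\bm{c}\in\mathbb{R}^{|\mathcal{F}(\omega)|}\}$. Since the signal class is the linear space $PW_\omega(\mathcal{G})$, perfect reconstruction of $\bm{f}$ from $\bm{f}_{\mathcal{S}}$ means exactly that the map $\bm{f}\mapsto\bm{f}_{\mathcal{S}}$ is injective on $PW_\omega(\mathcal{G})$; equivalently $\mathbf{U}_{\mathcal{S}\mathcal{F}}$ has full column rank; equivalently the only $\omega$-bandlimited signal vanishing on $\mathcal{S}$ is $\bm{0}$, i.e. $PW_\omega(\mathcal{G})\cap L_2(\mathcal{S}^c)=\{\bm{0}\}$. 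First I would record this chain of equivalences, which follows immediately from linearity and the definitions of $PW_\omega(\mathcal{G})$ and $L_2(\mathcal{S}^c)$.

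Next I would connect $PW_\omega(\mathcal{G})\cap L_2(\mathcal{S}^c)=\{\bm{0}\}$ to the condition $\omega<\omega_c(\mathcal{S})$, reading $\omega(\bm{\phi})$ as the bandwidth of $\bm{\phi}$ (the smallest $\omega$ with $\bm{\phi}\in PW_\omega(\mathcal{G})$), so that $\bm{\phi}\in PW_\omega(\mathcal{G})\iff\omega(\bm{\phi})\le\omega$. For sufficiency, assume $\omega<\omega_c(\mathcal{S})$ and suppose, for contradiction, that some $\bm{\phi}\neq\bm{0}$ lies in $PW_\omega(\mathcal{G})\cap L_2(\mathcal{S}^c)$; then $\bm{\phi}\in L_2(\mathcal{S}^c)\setminus\{\bm{0}\}$ gives $\omega(\bm{\phi})\ge\omega_c(\mathcal{S})>\omega$, contradicting $\omega(\bm{\phi})\le\omega$, so the intersection is trivial and reconstruction succeeds. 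For necessity I would argue the contrapositive: if $\omega\ge\omega_c(\mathcal{S})$, use the fact that $\mathbf{L}$ has only finitely many eigenvalues, so $\{\omega(\bm{\phi}):\bm{\phi}\in L_2(\mathcal{S}^c)\setminus\{\bm{0}\}\}$ is a finite subset of $\{\lambda_0,\ldots,\lambda_{N-1}\}$ and the infimum $\omega_c(\mathcal{S})$ is attained by some $\bm{\phi}^*\neq\bm{0}$; then $\omega(\bm{\phi}^*)=\omega_c(\mathcal{S})\le\omega$ shows $\bm{\phi}^*\in PW_\omega(\mathcal{G})\cap L_2(\mathcal{S}^c)$, so $\mathcal{S}$ is not a uniqueness set and reconstruction fails. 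The two directions together say reconstruction is possible precisely when $\omega<\omega_c(\mathcal{S})$, which is the claim that $\omega_c(\mathcal{S})$ is the exact cutoff frequency.

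The step I expect to be the main obstacle is the boundary/strictness bookkeeping in the necessity direction — making precise why $\omega<\omega_c(\mathcal{S})$, rather than $\omega\le\omega_c(\mathcal{S})$, is the correct condition. The resolution rests on the spectrum being discrete, which turns the infimum defining $\omega_c(\mathcal{S})$ into a minimum and lets one exhibit the explicit obstructing signal $\bm{\phi}^*$ as soon as $\omega$ reaches $\omega_c(\mathcal{S})$; I would also dispose of the degenerate case $\mathcal{S}=\mathcal{V}$ separately (there $L_2(\mathcal{S}^c)=\{\bm{0}\}$, $\omega_c(\mathcal{S})=\infty$, and every signal is trivially recoverable), so that the infimum is well defined. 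Everything else — the equivalence with full column rank of $\mathbf{U}_{\mathcal{S}\mathcal{F}}$, and, if an explicit reconstruction map is desired, the formula $\widehat{\bm{f}}=\mathbf{U}_{\mathcal{V}\mathcal{F}}\mathbf{U}_{\mathcal{S}\mathcal{F}}^{+}\bm{f}_{\mathcal{S}}$ which is valid exactly under that rank condition — is routine linear algebra.
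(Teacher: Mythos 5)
Your argument is correct, but note that the paper itself offers no proof of this statement: it is quoted as background, with the proof deferred to the cited reference (Anis et al., Theorem 2), so there is no internal derivation to compare against. Your self-contained proof is the standard uniqueness-set argument and it is sound: perfect recovery on the linear class $PW_\omega(\mathcal{G})$ is equivalent to injectivity of the restriction map, i.e., $PW_\omega(\mathcal{G})\cap L_2(\mathcal{S}^c)=\{\bm{0}\}$ (equivalently, full column rank of $\mathbf{U}_{\mathcal{S}\mathcal{F}}$, which also justifies the explicit interpolator $\mathbf{U}_{\mathcal{V}\mathcal{F}}\mathbf{U}_{\mathcal{S}\mathcal{F}}^{+}\bm{f}_{\mathcal{S}}$); sufficiency follows from the definition of the infimum, and necessity from the observation that the achievable bandwidths over nonzero $\bm{\phi}\in L_2(\mathcal{S}^c)$ form a finite subset of the spectrum, so the infimum is attained by an obstructing signal $\bm{\phi}^*$ as soon as $\omega\ge\omega_c(\mathcal{S})$ — this is exactly the point that makes the strict inequality the right threshold. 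Two remarks. First, you correctly read the theorem at the class level ("every $\bm{f}\in PW_\omega$ is determined by $\bm{f}_{\mathcal{S}}$"), which is the intended meaning of the loosely worded "the signal can be perfectly reconstructed if and only if". Second, your reading of $\omega(\bm{\phi})$ as the bandwidth (the largest eigenvalue carrying a nonzero Fourier coefficient, i.e., the smallest $\omega$ with $\bm{\phi}\in PW_\omega$) departs from the paper's literal wording, which says "minimum eigenvalue ... with non-zero graph Fourier coefficients"; the paper's phrasing appears to be a typo, since under that literal definition the theorem is false, so your interpretation — which matches the original reference — is the right one, and it would be worth flagging that discrepancy explicitly rather than silently adopting the corrected definition.
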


To avoid the calculation of the true cutoff frequency $\omega_c(\mathcal{S})$, which needs the computation of the graph Fourier basis, we can use the estimated cutoff frequency $\Omega_k(\mathcal{S})$ for the sampling set $\mathcal{S}$:
\begin{equation}
\Omega_k(\mathcal{S})=\inf_{\bm{\phi}\in L_2(\mathcal{S}^c)} \left(\frac{\bm{\phi}^T{\mathbf{ L}}^k\bm{\phi}}{\bm{\phi}^T\bm{\phi}}\right),
\label{eq:estcf}
\end{equation}
where $k\in\mathbb{Z}^{+}$ is a parameter that provides a trade-off between performance and complexity. A large $k$ leads the estimated cutoff frequency to be close to the actual bandwidth. 
\eqref{gs1} selects vertices so as to maximize the estimated cutoff frequency in \eqref{eq:estcf}:
\begin{equation}
\begin{split}
{\mathcal{S}}^*=\argmax_{\mathcal{S}\subset\mathcal{V}:|\mathcal{S}|=F} \Omega_k(\mathcal{S})&=\argmax_{\mathcal{S}\subset\mathcal{V}:|\mathcal{S}|=F}\min_{\bm{\psi}}\frac{{\bm{\psi}}^T(\mathbf{ L}^k)_{\mathcal{S}^c}{\bm{\psi}}}{{\bm{\psi}}^T{\bm{\psi}}}\\
&=\argmax_{\mathcal{S}\subset\mathcal{V}:|\mathcal{S}|=F}\mu_\text{min}((\mathbf{ L}^k)_{\mathcal{S}^c}).
\end{split}
\end{equation}

\subsubsection{Based on Error Minimization}
The objective functions shown in \eqref{gs2-1} and \eqref{gs2-2} are determined so as to minimize the reconstruction error. It assumes that the measured signal is corrupted by noise and/or not bandlimited; that is, the measured signal is defined as $\bm{o}=\bm{f}_{\mathcal{S}}+\bm{n}_{\mathcal{S}}$. 
This method also uses the reconstruction method in \eqref{rec}; then, the reconstruction error becomes $\bm{e}=\bm{f}-\mathbf{ U}_{\mathcal{V}\mathcal{F}}(\mathbf{ U}_{\mathcal{S}\mathcal{F}})^+(\bm{f}_{\mathcal{S}}+\bm{n}_{\mathcal{S}})=\mathbf{ U}_{\mathcal{V}\mathcal{F}}(\mathbf{ U}_{\mathcal{S}\mathcal{F}})^+\bm{n}_{\mathcal{S}}$. 
\begin{itemize}
\renewcommand{\labelenumi}{(\roman{enumi})}
\item\textit{MinSpec:} 
\eqref{gs2-1} is obtained by minimizing the $\ell_2$ norm of the error:
 %E-optimal
\begin{equation}
\begin{split}
{\mathcal{S}}^*&=\argmin_{\mathcal{S}\subset\mathcal{V}:|\mathcal{S}|=F}\|\bm{e}\|_2\\
&=\argmin_{\mathcal{S}\subset\mathcal{V}:|\mathcal{S}|=F}\|\mathbf{ U}_{\mathcal{V}\mathcal{F}}(\mathbf{ U}_{\mathcal{S}\mathcal{F}})^+\bm{n}_{\mathcal{S}}\|_2\\
&\leq\argmin_{\mathcal{S}\subset\mathcal{V}:|\mathcal{S}|=F}\|\mathbf{ U}_{\mathcal{V}\mathcal{F}}\|_2\|(\mathbf{ U}_{\mathcal{S}\mathcal{F}})^+\|_2\|\bm{n}_{\mathcal{S}}\|_2\\
&=\argmin_{\mathcal{S}\subset\mathcal{V}:|\mathcal{S}|=F}\|(\mathbf{ U}_{\mathcal{S}\mathcal{F}})^+\|_2\\
&=\argmax_{\mathcal{S}\subset\mathcal{V}:|\mathcal{S}|=F} \sigma_{\text{min}}(\mathbf{ U}_{\mathcal{S}\mathcal{F}}).
\end{split}
\end{equation}
\item\textit{MinTrac:}  \eqref{gs2-2} minimizes the mean squared errors, i.e., minimizes the trace of the error covariance matrix $\mathbf{ E}$:
\begin{equation}
\begin{split}
\mathcal{S}^*=&\argmin_{\mathcal{S}\subset\mathcal{V}:|\mathcal{S}|=F} \text{tr}[\mathbf{ E}]\\
:=&\argmin_{\mathcal{S}\subset\mathcal{V}:|\mathcal{S}|=F} \text{tr}[\bm{e}\bm{e}^*]\\
=&\argmin_{\mathcal{S}\subset\mathcal{V}:|\mathcal{S}|=F} \mathbf{ U}_{\mathcal{V}\mathcal{F}}(\mathbf{ U}_{\mathcal{S}\mathcal{F}}^*\mathbf{ U}_{\mathcal{S}\mathcal{F}})^{-1}\mathbf{ U}_{\mathcal{V}\mathcal{F}}\\
=&\argmin_{\mathcal{S}\subset\mathcal{V}:|\mathcal{S}|=F}\text{tr}[(\mathbf{ U}_{\mathcal{S}\mathcal{F}}^*\mathbf{ U}_{\mathcal{S}\mathcal{F}})^{-1}].
\end{split}
\end{equation}
\end{itemize}

\subsubsection{Based on Localized Basis}
\cite{Tsitsv2016} uses the basis localized both in the vertex and graph frequency domains for graph sampling and signal recovery. The basis $\bm{\psi}_i$, $i=0,1,\ldots,N-1$ perfectly localized in the graph frequency domain and highly localized in the vertex domain, is designed by solving the following problem:
\begin{equation}
\begin{split}
\bm{\psi}_i=\argmax_{\bm{\psi}_i}\|\mathbf{ D}_\text{ver}\bm{\psi}_i\|_2\\
\text{s. t. } \|\bm{\psi}_i\|_2=1, \mathbf{ D}_\text{sp}\bm{\psi}_i=\bm{\psi}_i,\\
\langle\bm{\psi}_i, \bm{\psi}_j\rangle=0, j=1,\ldots,i-1.
\end{split}
\end{equation}
Its optimal solution coincides with the eigenvectors of $\mathbf{ D}_\text{sp}\mathbf{ D}_\text{ver}\mathbf{ D}_\text{sp}$, i.e., $\bm{\psi}_i=\bm{v}_i(\mathbf{ D}_\text{sp}\mathbf{ D}_\text{ver}\mathbf{ D}_\text{sp})$. 

The reconstruction shown in \eqref{rec_tes} indicates that the sampled signal $\mathbf{ D}_\text{ver}\bm{f}$ is interpolated by $\bm{\psi}_i$ for recovering the original signal:
\begin{equation}
\begin{split}
\widehat{\bm{f}}&=\sum^{|\mathcal{F}|-1}_{i=0}\frac{1}{\sigma_i^2(\mathbf{ D}_\text{sp}\mathbf{ D}_\text{ver}\mathbf{ D}_\text{sp})}\langle\mathbf{ D}_\text{ver}\bm{f},\bm{\psi}_i\rangle\bm{\psi}_i
\\&=\bm{\Psi}_{\mathcal{V},\mathcal{F}}\bm{\Sigma}^{-1}_{\mathcal{F},\mathcal{F}}\bm{\Psi}^*_{\mathcal{V},\mathcal{F}}\mathbf{ D}_\text{ver}\bm{f}
\\&=(\mathbf{ D}_\text{sp}\mathbf{ D}_\text{ver}\mathbf{ D}_\text{sp})^+\mathbf{ D}_\text{ver}\bm{f}.
\end{split}
\end{equation}

\eqref{gs3-1}--\eqref{gs3-3} reduce the error caused by noise which can be written as $\bm{e}=\bm{f}-\widetilde{\bm{o}}=\bm{f}-(\mathbf{ D}_\text{sp}\mathbf{ D}_\text{ver}\mathbf{ D}_\text{sp})^+\mathbf{ D}_\text{ver}(\bm{f}+\bm{n})
=(\mathbf{ D}_\text{sp}\mathbf{ D}_\text{ver}\mathbf{ D}_\text{sp})^+\mathbf{ D}_\text{ver}\bm{n}$ where $\widetilde{\bm{o}}$ is the sampled signal with zero interpolation: $\widetilde{\bm{o}}_\mathcal{S}=\bm{o}_\mathcal{S}$ and  $\widetilde{\bm{o}}_{\mathcal{S}^c}=\bm{0}_{|\mathcal{S}^c|}$.
\begin{itemize}
\renewcommand{\labelenumi}{(\roman{enumi})}
\item\textit{MinFrob:}
\eqref{gs3-1} is obtained by minimizing the Frobenius norm of the error $(\mathbf{ D}_\text{sp}\mathbf{ D}_\text{ver}\mathbf{ D}_\text{sp})^+$:
\begin{equation}
\begin{split}
\mathcal{S}^*&=\argmin_{\mathcal{S}\subset\mathcal{V}:|\mathcal{S}|=F} \|(\mathbf{ D}_\text{sp}\mathbf{ D}_\text{ver}\mathbf{ D}_\text{sp})^+\mathbf{ D}_\text{ver}\bm{n}\|_F\\
&\leq\argmin_{\mathcal{S}\subset\mathcal{V}:|\mathcal{S}|=F} \|(\text{diag}(\bm{1}_{\mathcal{F}})\mathbf{ U}^*\mathbf{ D}_\text{ver})^+\|_F\|\mathbf{ D}_\text{ver}\bm{n}\|_F.\\
&=\argmin_{\mathcal{S}\subset\mathcal{V}:|\mathcal{S}|=F}\sum_{i=0}^{|\mathcal{F}|-1}\frac{1}{\lambda_i(\mathbf{ U}_\mathcal{SF}^*)}.
\end{split}
\end{equation}
\item\textit{MaxFrob:}
\eqref{gs3-2} is the approximation of (i) and maximizes the Frobenius norm of $\mathbf{ D}_\text{sp}\mathbf{ D}_\text{ver}\mathbf{ D}_\text{sp}$. 
It does not need any eigendecomposition of the variation operator for solving the problem. 
\item\textit{MaxPVol:} 
\eqref{gs3-3} maximizes the volume of the parallelepiped formed with the columns of $\mathbf{ U}_{{\mathcal{S}}\mathcal{F}}^*$ which can be computed by the determinant of  $\mathbf{ U}_{\mathcal{SF}}^*\mathbf{ U}_{\mathcal{SF}}$.
\end{itemize}

\subsection{Derivations of Objective Functions in IV-B-1}
The spectral norm of the error covariance matrix can be bounded as
\begin{equation}
\begin{split}
\|\mathbf{ E}\|_2&=\|\mathbf{ T}^{k/2}((\mathbf{ T}^{k/2})_{{\mathcal S}{\mathcal V}}^*(\mathbf{ T}^{k/2})_{{\mathcal S}{\mathcal V}})^+\mathbf{ T}^{k/2}\|_2\\
&\leq\|\mathbf{ T}^{k/2}\|_2\|((\mathbf{ T}^{k/2})_{{\mathcal S}{\mathcal V}})^+((\mathbf{ T}^{k/2})_{{\mathcal S}{\mathcal V}})^+)^*\|_2\|\mathbf{ T}^{k/2}\|_2.\\
\end{split}
\end{equation}
Therefore, the objective function for minimizing the error covariance matrix can be represented as follows:
\begin{align}
\mathcal{S}^*&=\argmin_{\mathcal{S}\subset\mathcal{V}:|\mathcal{S}|=F}\|\mathbf{ E}\|_2\nonumber\\
&=\argmin_{\mathcal{S}\subset\mathcal{V}:|\mathcal{S}|=F}\|((\mathbf{ T}^{k/2})_{{\mathcal S}{\mathcal V}})^+\|_2\label{minspec}\\
&\leq\argmin_{\mathcal{S}\subset\mathcal{V}:|\mathcal{S}|=F}\|((\mathbf{ T}^{k/2})_{{\mathcal S}{\mathcal V}})^+\|_F\label{minFrob}\\
&=\argmin_{\mathcal{S}\subset\mathcal{V}:|\mathcal{S}|=F}\text{tr}[((\mathbf{ T}^{k/2})_{{\mathcal S}{\mathcal V}}(\mathbf{ T}^{k/2})_{{\mathcal S}{\mathcal V}}^*)^+]\nonumber\\
&=\argmin_{\mathcal{S}\subset\mathcal{V}:|\mathcal{S}|=F}\text{tr}[((\mathbf{ T}^{k})_{{\mathcal S}})^{-1}]\label{mintr}\\
&\approx\argmax_{\mathcal{S}\subset\mathcal{V}:|\mathcal{S}|=F}\text{tr}[(\mathbf{ T}^{k})_{{\mathcal S}}]\nonumber.
\end{align}
The above functions minimize the spectral norm \eqref{minspec}, Frobenius norm \eqref{minFrob}, or trace \eqref{mintr} of the error covariance matrix.  Furthermore, because $(g(\mathbf{ \Lambda}_{{\mathcal F}^c})/\beta)^k\approx \mathbf{ 0}_{|{\mathcal F}^c|}$ for large $k$, the determinant of the error covariance matrix becomes
\begin{equation}
\begin{split}
&\det[\mathbf{ E}]\\
=&\det[\mathbf{ T}^{k/2}(\mathbf{ U}g(\mathbf{ \Lambda})^{k/2}\mathbf{ U}_{{\mathcal S}{\mathcal V}}^*\mathbf{ U}_{{\mathcal S}{\mathcal V}}g(\mathbf{ \Lambda})^{k/2}\mathbf{ U}^*)^+\mathbf{ T}^{k/2}]\\
=&\det[\mathbf{ T}^{k/2}\mathbf{ U}(g(\mathbf{ \Lambda}_{\mathcal F})^{k/2}\mathbf{ U}_{{\mathcal S}{\mathcal F}}^*\mathbf{ U}_{{\mathcal S}{\mathcal F}}g(\mathbf{ \Lambda}_{\mathcal F})^{k/2})^+\mathbf{ U}^*\mathbf{ T}^{k/2}]\\
=&\det[\mathbf{ T}^{k/2}\mathbf{ U}]\det[(g(\mathbf{ \Lambda}_{\mathcal F})^{k/2}\mathbf{ U}_{{\mathcal S}{\mathcal F}}^*\mathbf{ U}_{{\mathcal S}{\mathcal F}}g(\mathbf{ \Lambda}_{\mathcal F})^{k/2})^+]\\
&\times\det[\mathbf{ U}^*\mathbf{ T}^{k/2}]\\
=&\det[\mathbf{ T}^{k/2}\mathbf{ U}]\det[(\mathbf{ U}_{{\mathcal S}{\mathcal F}}g(\mathbf{ \Lambda}_{\mathcal F})^{k}\mathbf{ U}_{{\mathcal S}{\mathcal F}}^*)^+]\det[\mathbf{ U}^*\mathbf{ T}^{k/2}]\\
=&\det[\mathbf{ T}^{k/2}\mathbf{ U}]\det[(\mathbf{ T}_{{\mathcal S}})^{-1}]\det[\mathbf{ U}^*\mathbf{ T}^{k/2}].\\
\end{split}
\end{equation}
The objective function to minimize the determinant of the error covariance matrix can be represented as
\begin{equation}
\mathcal{S}^*=\argmin_{\mathcal{S}\subset\mathcal{V}:|\mathcal{S}|=F}\det[\mathbf{ E}]
=\det[(\mathbf{ T}_{{\mathcal S}})^{-1}].
\end{equation}

\subsection{Proof of Proposition 1}
MaxCutoff also can be regarded as the SSS for minimizing the error caused by the reconstruction with ${\mathbf{T}}=(\mathbf{L}+\delta\mathbf{I})^{-1}$.
Because $A_T\sigma_\text{min}(\mathbf{T}_{\mathcal{S}})\leq\sigma_\text{min}((\mathbf{T}^{-1})_{\mathcal{S}^c})\leq B_T\sigma_\text{min}(\mathbf{T}_{\mathcal{S}})$ \cite{Govaer1989}, where $A_T$ and $B_T$ are some constant values determined from $\mathbf{T}$, \eqref{minspec} can be lower-bounded as
\begin{equation}
\begin{split}
\mathcal{S}^*&=\argmax_{\mathcal{S}\subset\mathcal{V}:|\mathcal{S}|=F}\sigma_\text{min}((\mathbf{ T}^{k})_{{\mathcal S}})\\
&\geq \argmax_{\mathcal{S}\subset\mathcal{V}:|\mathcal{S}|=F}\sigma_\text{min}((\mathbf{ T}^{-k})_{{\mathcal S}^c})\\
&= \argmax_{\mathcal{S}\subset\mathcal{V}:|\mathcal{S}|=F}\sigma_\text{min}(((\mathbf{L}+\delta\mathbf{I})^{k})_{{\mathcal S}^c}).\\
\end{split}
\label{maxcut_t}
\end{equation}
When $\delta$ goes to zero, \eqref{maxcut_t} becomes
\begin{equation}
\begin{split}
&\lim_{\delta\rightarrow 0}\argmax_{\mathcal{S}\subset\mathcal{V}:|\mathcal{S}|=F}\sigma_\text{min}(((\mathbf{L}+\delta\mathbf{I})^{k})_{{\mathcal S}^c})\\
&=\argmax_{\mathcal{S}\subset\mathcal{V}:|\mathcal{S}|=F}\sigma_\text{min}((\mathbf{ L}^k)_{\mathcal{S}^c}).\\
\end{split}
\label{maxcut_t2}
\end{equation}
As a result, maximizing the objective function of MaxCutoff shown in \eqref{gs1} leads to the minimization of the spectral norm of the error covariance matrix, where the error is caused by our reconstruction method in \eqref{eq3}.

%\bibliographystyle{IEEEtran}
%\bibliography{IEEEabrv,STTO}    

% Generated by IEEEtran.bst, version: 1.14 (2015/08/26)

\end{document}